\begin{document}
\newtheorem{defn}{Definition}
\newtheorem{lem}{Lemma}
\newtheorem{cor}{Corollary}
\newtheorem{thm}{Theorem}
\newtheorem{cex}{Counterexample}
\newtheorem{rem}{Remark}
\algnewcommand\algorithmicinput{\textbf{INPUT:}}
\algnewcommand\INPUT{\item[\algorithmicinput]}
\algnewcommand\algorithmicoutput{\textbf{OUTPUT:}}
\algnewcommand\OUTPUT{\item[\algorithmicoutput]}
\algnewcommand\algorithmicbreak{\textbf{break}}
\algnewcommand\Break{\item[\algorithmicbreak]}
\newcommand*{\LargerCdot}{\raisebox{-0.25ex}{\scalebox{1.5}{$\cdot$}}}
\title{The K Shortest Paths Problem with Application to Routing}
\author{David Burstein\textsuperscript{1,2} \and Leigh Metcalf \textsuperscript{1}}
\date{\today}
\baselineskip 1.5em
\maketitle
\renewcommand{\thefootnote}{\fnsymbol{footnote}}
\footnotetext{1. Software Engineering Institute.  Carnegie Mellon University. 4500 Fifth Avenue.\newline Pittsburgh, PA 15213}
\footnotetext{2. Department of Mathematics and Statistics.  Swarthmore College.  500 College Avenue,   Swarthmore, PA 19081}
\footnotetext{3. Questions or comments may be sent to dburste1@swarthmore.edu}
\footnotetext{4. [Distribution Statement A] This material has been approved for public release and unlimited distribution.  Please see Copyright notice for non-US government use and distribution.}
\bibliographystyle{plain}
\pagestyle{myheadings}
\thispagestyle{plain}
\markboth{D. Burstein and L. Metcalf}{The K Shortest Paths Problem}
\begin{abstract}
Due to the computational complexity of finding almost shortest simple paths, we propose that identifying a larger collection of (nonbacktracking) paths is more efficient than finding almost shortest simple paths on positively weighted real-world networks.  First, we present an easy to implement $O(m\log m+kL)$ solution for finding all (nonbacktracking) paths with bounded length $D$ between two arbitrary nodes on a positively weighted graph, where $L$ is an upperbound for the number of nodes in any of the $k$ outputted paths.  Subsequently, we illustrate that for undirected Chung-Lu random graphs, the ratio between the number of nonbacktracking and simple paths asymptotically approaches $1$ with high probability for a wide range of parameters.  We then consider an application to the almost shortest paths algorithm to measure path diversity for internet routing in a snapshot of the Autonomous System graph subject to an edge deletion process.  
\end{abstract}

\textbf{Keywords:}k shortest paths, internet routing, path sampling, edge deletion, simple paths, random graphs
\newline\newline
\indent \textbf{MSC:} 05C38, 05C85, 68R10, 90C35

\section{Introduction}
\label{intro}
Calculating almost shortest simple paths between two nodes on positively weighted graphs arises in many applications; such applications include, inferring the spreading path of a pathogen in a social network \cite{onnela2012spreading}, proposing novel complex relationships between biological entities \cite{frijters2010literature,he2011mining,shih2012single}, identifying membership of hidden communities in a graph \cite{palla2005uncovering,porter2016dynamical} and routing in the Autonomous System (AS) graph, as discussed in this work and others \cite{leguay2005describing,savage1999end}.  Even though research has demonstrated through simulation that exponentially slow solutions to the almost shortest simple paths problem often out-perform their polynomial time counterparts \cite{hershberger2007finding}, very little work has explored how properties in these empirically observed networks suggest efficient solutions to finding the almost shortest simple paths.  Consequently, we address this gap by proposing an efficient method in finding almost shortest simple paths for a specific family of graphs that emulate many features found in real-world networks.

When considering solutions to the almost shortest paths problem on a real-world network, as opposed to an arbitrary graph, such a solution should exploit the small diameter and locally tree-like properties of the graph. Furthermore, the number of paths between two fixed nodes grows exponentially in terms of path length.  The former property emphasizes that the optimal complexity for finding explicit representations for the $k$ shortest simple paths between two nodes, should roughly be $O(m+kL)$, where the graph has $m$ edges and paths consist of at most $L$ nodes. In addition, the latter property highlights the fact that constructing a set of just the $k$ shortest paths could ultimately exclude many paths of equal length.  To efficiently find almost shortest simple paths for real world networks, we consider the problem of identifying nonbacktracking paths, which allows us to revisit nodes under certain constraints.  We then provide theoretical and numeric results identifying conditions in the Chung-Lu random graph model such that the number of simple paths of prescribed length approximates the number of nonbacktracking paths.  As a result, we can construct an efficient algorithm for finding almost simple shortest paths by identifying a slightly larger set of nonbacktracking paths and deleting the paths that are not simple.
 
We emphasize that while our choice for considering the Chung-Lu model may appear arbitrary, Chung-Lu random graphs are closely related to the Stochastic Kronecker Graph model \cite{pinarsimilarity2014,chakrabarti2004r,leskovec2007scalable,leskovec2010kronecker}, a commonly used random graph model for evaluating the efficiency of graph algorithms \cite{bader2008snap,edmonds2010space,vineet2009fast}. Additionally, Chung-Lu random graphs  emulate many of the properties observed in real world networks; more specifically, realizations possess a small diameter along with degree heterogeneity. We also anticipate that our results carry over for other random graph models that are also locally tree-like \cite{castellano2017topological}.   While the last statement may appear obvious, proving precise theoretical upperbounds for the ratio between the number of simple and non-simple paths, is intimately related to constructing asymptotics for the dominating eigenvalue of the adjacency matrix, a highly nontrivial problem \cite{chung2003spectra,restrepo2007approximating,van2010graph,burstein2017asymptotics}.\newline
\indent In application, many existing algorithms for identifying the $k$ shortest simple paths are not designed to exploit properties often found in real world networks. Such approaches often require deleting edges from the graph and running a shortest path algorithm on the newly formed graph. Recalling that $n$ is the number of nodes and $m$ is the number of edges in a graph, Yen \cite{yen1971finding} provides an $O(kn[m+n\log n])$ solution that works for weighted, directed graphs, while Katoh \cite{katoh1982efficient} and Roditty \cite{roditty2005replacement}  provide $O(k[m+n\log n])$ and $O(km\sqrt{n})$  solutions respectively for undirected graphs.  More recently, Bernstein \cite{bernstein2010nearly} provides an  $O(km/\epsilon)$ algorithm for computing approximate replacement paths.  But since we want to calculate {\em{many}} paths, such solutions can be asymptotically expensive.
\newline \indent In contrast, the asymptotic performance for computing the $k$ shortest paths is more appropriate for implementation on real world networks.  Eppstein provides both $O(m +n\log n +k\log k + Lk)$ and  $O(m +n\log n + Lk)$  solutions \cite{eppstein1994finding,eppstein1998finding} for finding explicit representations of the $k$ shortest paths between two nodes, where $L$ is an upperbound on the number of nodes that appear in a path.  Nevertheless, recent variations to Eppstein's solution often emphasize the asymptotically inferior version, due to the sophistication and large constant factor behind the $O(m +n\log n + Lk)$ solution \cite{frederickson1993optimal,jimenez1999computing,jimenez2003lazy,aljazzar2011k}. Consequently, our results demonstrating that in Chung-Lu random graphs, most almost shortest nonbacktracking paths are simple strongly suggests that for real-world networks we should identify almost shortest nonbacktracking paths to solve the almost shortest {\em simple} path problem.  

By building upon the work of Byers and Waterman \cite{byers1984determining,matthew2005near}, we provide a simple $O(m\log m+ Lk)$ solution, for finding all (nonbacktracking) paths bounded by length $D$ between two nodes in a directed positively weighted graph, where $L$ is an upperbound for the number of nodes in a path and $k$ is the number of paths returned.   
\newline\newline
An outline of the our paper is as follows:

\begin{itemize}
\item In Section 2 we present a simple algorithm for finding all paths no greater than a prescribed length  in  $O(m\log m+kL)$ time, where $L$ is an upperbound on the number of nodes appearing in any of the $k$ shortest paths.  Furthermore, we illustrate that  for graphs with degree sequences following a power-law distribution, the time complexity of the algorithm is $O(m + kL)$. We also stress that the algorithm works for positively weighted directed and undirected graphs. 
\item Then in Section 3, we introduce the notion of nonbacktracking paths,  In particular, we explore properties of Chung-Lu random graphs in context to the almost shortest simple path problem and prove Corollary \ref{cor:main}, an asymptotic result that demonstrates that for a wide range of parameters, the ratio between the number of simple paths and nonbacktracking paths of prescribed length between two nodes approaches $1$ with high probability.  Subsequently, we illustrate how to extend the algorithm in Section 2 to compute almost shortest nonbacktracking paths with time complexity $O(m\log m + n + kL)$ and space complexity  $O(n +kL)$.  These results strongly suggests that it is often more efficient to use an almost shortest nonbacktracking path algorithm, such as the solution provided in Section 2, to solve for almost shortest simple paths, than an almost shortest simple path algorithm.  
\item And finally in Section 4, we explore applications to the almost shortest paths problem in context to internet routing, where we use our solution for finding almost shortest paths to measure the diversity of surviving paths under an edge deletion process.  We compare (a snapshot of) the AS graph to realizations of an Erdos-Renyi and Chung-Lu random graph and find that the AS graph behaves remarkably similar to realizations of the Chung-Lu random graph model under the edge deletion process.
\end{itemize} 
\section{Almost Shortest Paths Algorithm}
\subsection{Strategy for Finding Almost Shortest Paths} \label{sub:alg}
Even though many algorithms have been proposed for finding almost shortest simple paths between two nodes, very little work has considered the implications for implementing such a solution on real world networks.  We will argue in Section 3 that we should first compute almost shortest {\em nonbacktracking} paths between two nodes to solve the almost shortest simple paths problem.  In this section, we present a simple asymptotically efficient solution for finding {\em all} paths between two nodes less than a certain length.  We will then argue in Section 3 how to extend this algorithm to compute almost shortest nonbacktracking paths. Before presenting the algorithm, we first sketch the solution strategy.

To find all paths between two nodes with length less than a given value, $D$,  our solution constructs a path tree illustrating all possible choices in identifying paths from the source to the target.  As an example, consider finding all paths from node $s$ to node $t$ with length less than $3$ in the graph on the left side of the first panel in Figure \ref{fig:pathspanel}.  We stress that while this example focuses on an undirected unweighted graph, the algorithm will work for directed positively weighted graphs as well.

As mentioned before, informally, the path tree identifies all possible options for constructing almost shortest paths from node $s$ to node $t$.  First, the algorithm maps each node on the path tree to nodes in the original graph.  As all such paths must end with the node $t$, the algorithm maps the root of the tree to the node $t$ in the graph.  The right side of the first panel illustrates this initialization of the path tree.  

Subsequently in the second panel, the algorithm identifies the node(s), $t$, which  corresponds to the node(s) recently added to the path tree in the prior panel; we marked such nodes in blue.  Since all paths from $s$ to $t$ have length at most $3$, it follows that the node that precedes $t$ at the end of the path must have distance at most $2$ from $s$.  Consequently, we mark in yellow the neighbors of the blue node, $t$, with distance at most $2$ from the source. Then the algorithm adds children to the blue node in the tree that correspond to the yellow neighbors of $t$ in the original graph.  In this case, $t$ has three neighbors that have distance $2$ from the node $s$: $a$, $b$ and $c$.

\begin{figure}[h]
\centering
\includegraphics[scale=.6]{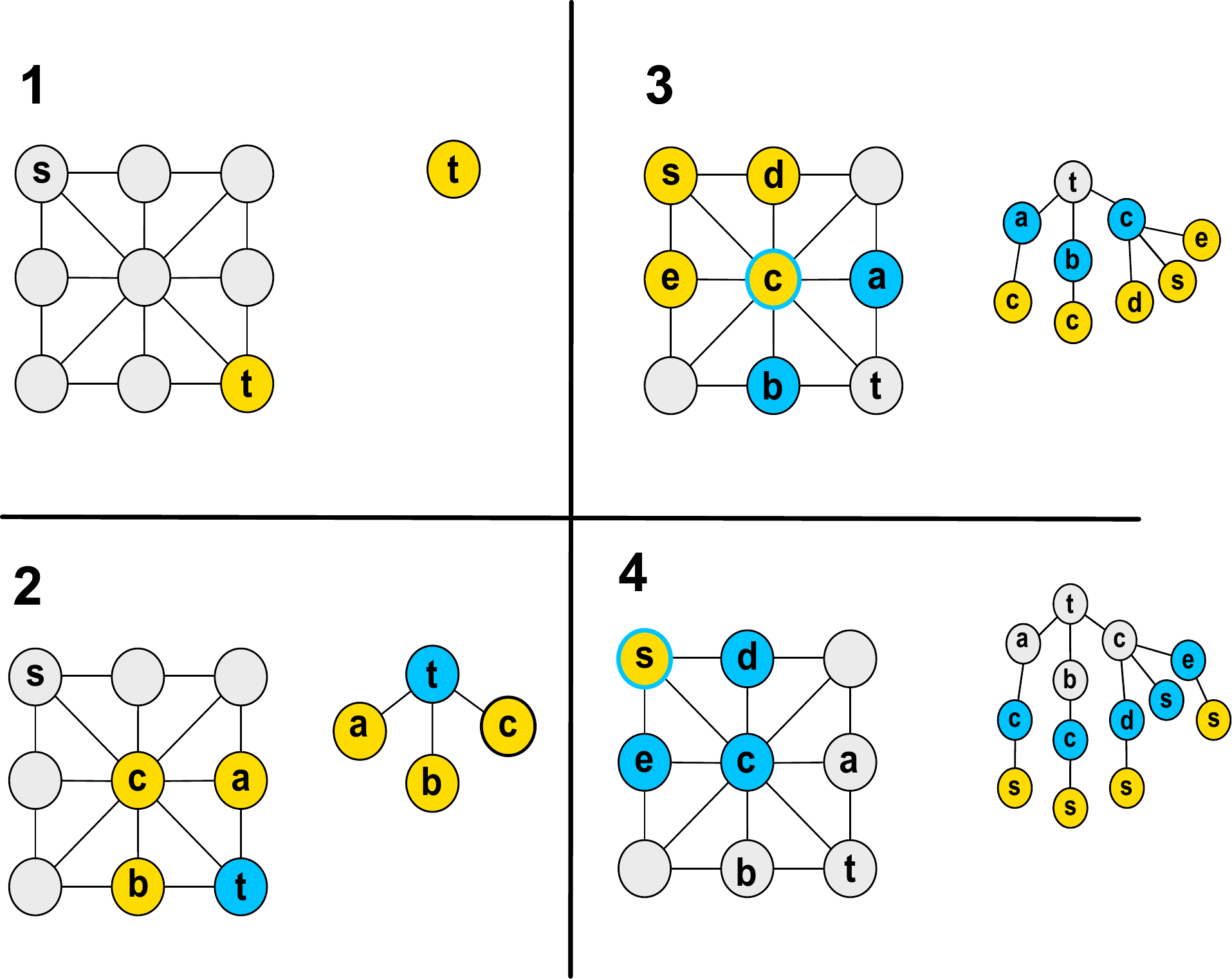}
\caption{An illustration on how to construct a path tree (on the right) for finding paths with length at most $3$ between source $s$ and target $t$ in an unweighted graph (on the left).  Each numbered panel corresponds to a step in constructing the path tree.  In the path tree, yellow nodes indicate newly added nodes, while blue nodes indicate that the nodes were added to the path tree in the prior step.  Similarly, the nodes that correspond to the yellow (blue) nodes in the path tree are also shaded yellow (blue).  Nodes that correspond to both a blue and yellow node in the path tree are shaded yellow with a blue border.}  
\label{fig:pathspanel}
\end{figure}

In Step 3, we repeat the same argument.  For each of the newly added nodes in the tree from the prior step, now marked in blue, we identify that node's neighbors in the original graph with distance at most $1$ from $s$.  Nodes $a$ and $b$ only have one such neighbor, $c$, that satisfies this constraint, so the algorithm adds a child that corresponds to $c$ to those respective blue nodes in the path tree.  Node $c$ in the original graph, which is both a blue and yellow node, has three  neighbors that have distance at most $1$: $d,e$ and $s$.

Finally in Step 4, for each of the newly added nodes in the tree from the prior step, the algorithm identifies the neighbors in the original graph with distance at most $0$ from $s$.   Step 4 completes the construction of the path tree.  We now demonstrate how to efficiently extract all paths from node $s$ to node $t$ with length at most $3$ using the path tree.  See Figure \ref{fig:pathconstruct}.  First, while constructing the path tree, we record all nodes that correspond to $s$ in the original graph. In the first panel, we highlighted all nodes that correspond to $s$ either in green or yellow. We focus on the yellow node.  In the second panel, by looking at the parent of the yellow node in the tree, we can identify the next node on the path, $e$.  Continuing this process for the third and fourth panels yields the path $s,e,c,t$.   

\begin{figure}[h]
\centering
\includegraphics[scale=.6]{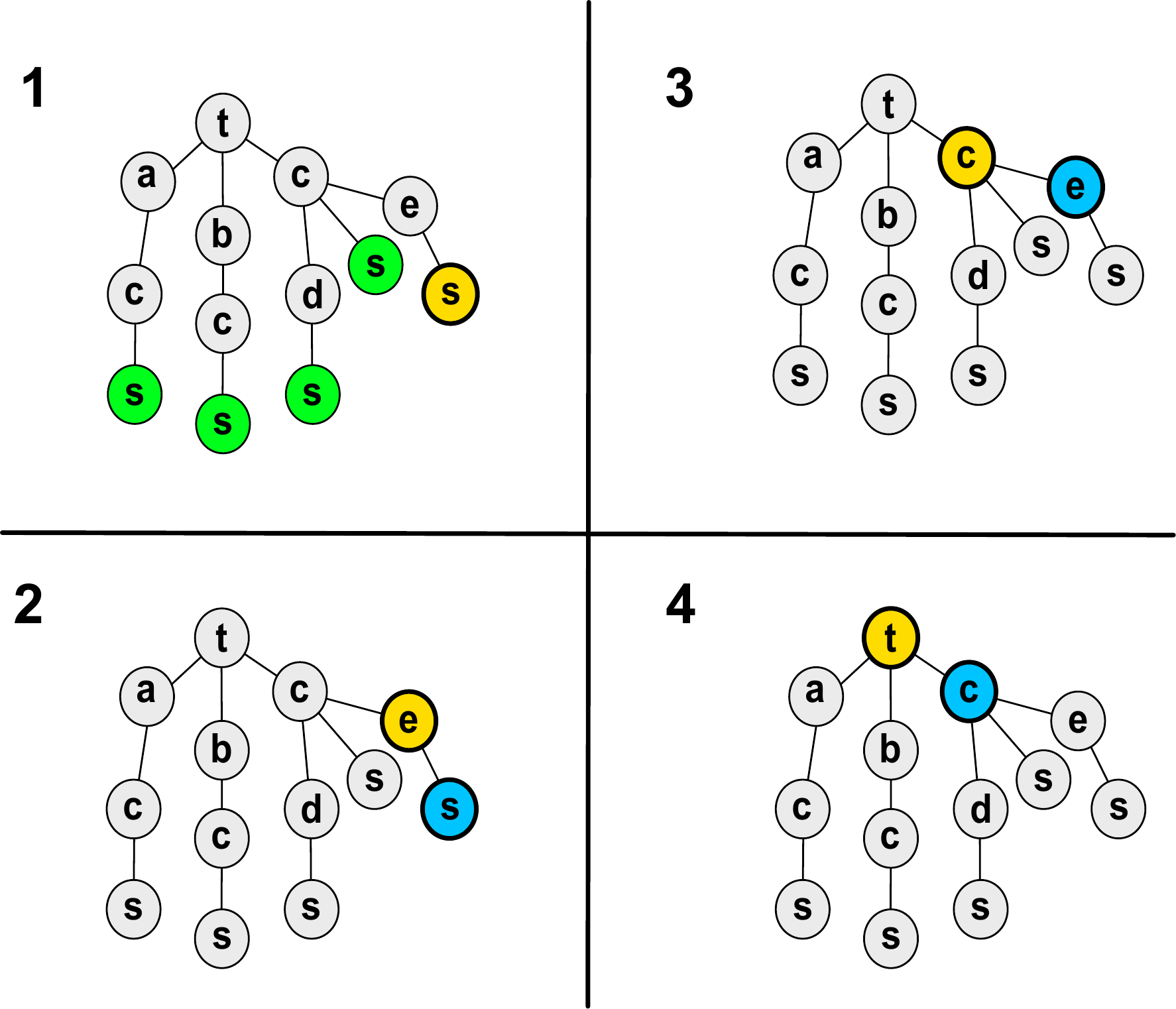}
\caption{An illustration on how to extract paths from the path tree.  We record all nodes that correspond to $s$ while constructing the path tree.  Then for each such node that corresponds to $s$, we traverse the path tree to the root to identify the corresponding path in the original graph.}  
\label{fig:pathconstruct}
\end{figure}

Now that we have explained the intuition behind the proposed solution, at this juncture we specify the inputs (and outputs) for the algorithm, \hyperlink{inputs}{Pathfind}.

\begin{tcolorbox}
\hypertarget{inputs}{\textbf{Pathfind}} requires the following inputs:
\begin{itemize}
\item $V$, a list of nodes in the graph.
\item $InNbrs_{x}$, a list of the incoming neighbors for each node $x \in V$.
\item The \textbf{source} and \textbf{target}, for the almost shortest paths.
\item $d(source,\cdot)$, the distance function from the source to any node in the graph.
\item $d(n,x)$, the positive edge weights in the graph where $n \in InNbrs_{x}$.
\item $D$, the upperbound on the lengths for the almost shortest paths 
\end{itemize}
\textbf{Pathfind} then outputs a list of all paths from the source to the target with length at most $D$.
\end{tcolorbox}

See \hyperlink{alg1}{Algorithm 1} for an outline of the algorithm, \textbf{Pathfind}.  Foremost, to achieve the desired time complexity, Step 1 in \textbf{Pathfind} sorts the incoming neighbors $n$ of each node $x$ according to $d(source,n)+d(n,x)$, the distance from the source to the neighbor $n$ plus the weight of the edge connecting the incoming neighbor $n$ to $x$.  Since nodes can have many neighbors, adding this step will prevent \textbf{Pathfind} from considering a potentially large number of neighbors that are not sufficiently close to the source to form an almost shortest path. 

Subsequently, Step 2 initializes the path tree described in the first panel in Figure \ref{fig:pathspanel}.  Whenever \textbf{Pathfind} adds nodes to the path tree, \textbf{Pathfind} defines the following attributes for such newly added nodes.  The \textbf{Parent} attribute returns the parent of a node on the path tree and the \textbf{ID} attribute of a node in the path tree identifies the corresponding node on the original graph.  Furthermore by construction, edges in the path tree also correspond to edges in the original graph, as neighbors in the path tree correspond to neighbors in the original graph.  Hence, traversing up a given node in the path tree to the root corresponds to a path in the original graph.  Consequently, the \textbf{trackdistance} attribute of a node in the path tree keeps track of the distance of {\em that} path in the original graph. At the conclusion of Step 2  \textbf{Pathfind} also initializes a set  \textbf{Pathstart} to keep track of any nodes in the path tree that correspond to the \textbf{source}, as mentioned in the discussion of Figure \ref{fig:pathconstruct}.   

\begin{figure}[h]
\centering
\includegraphics[scale=.1]{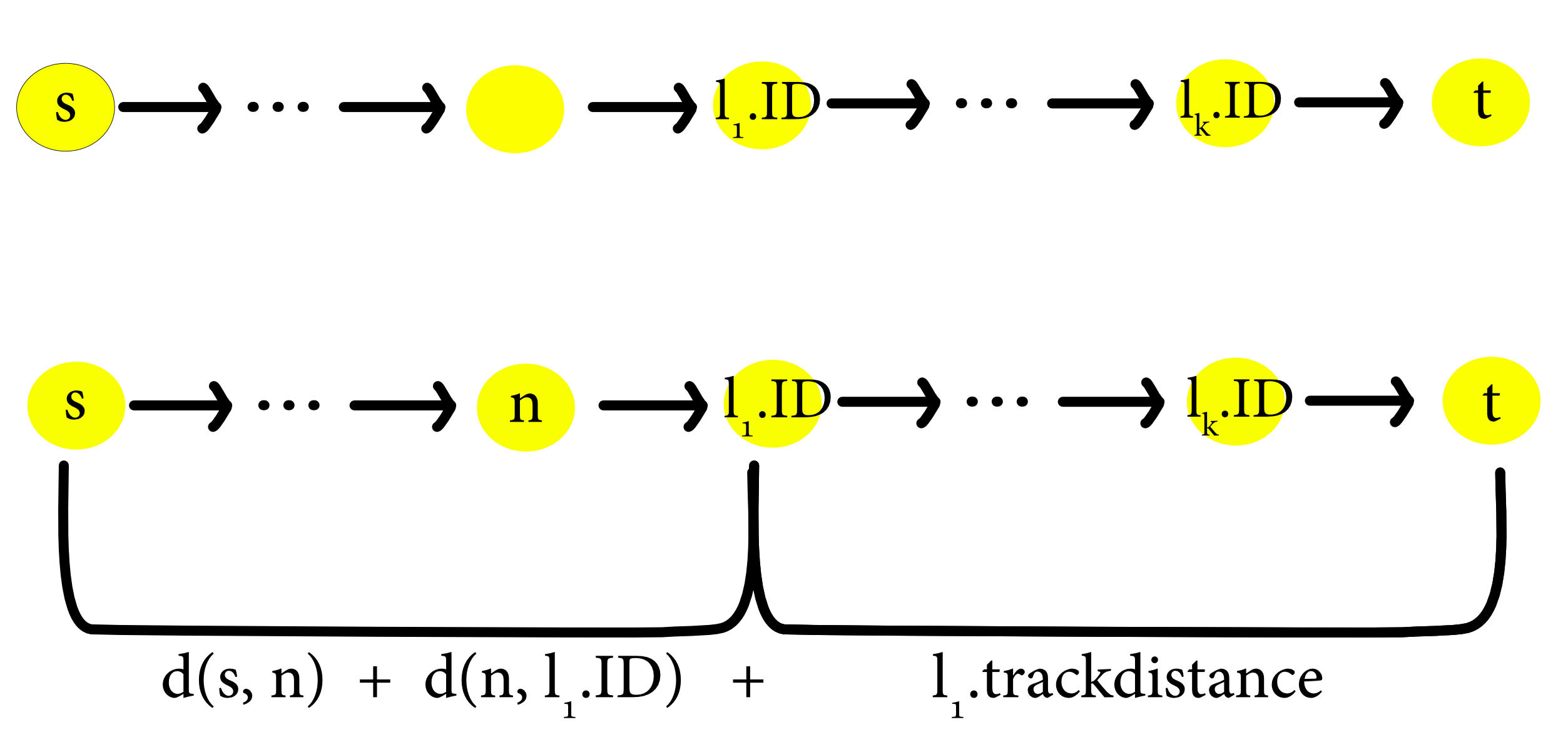}
\caption{(Top) Suppose we know that there exists a path of the form $(s,?,...,?,l_{1}.ID,...,l_{k}.ID,t)$ with length at most $D$.  (Bottom) We then wish to determine if there exists a path of the form $(s,?,...,?,n,l_{1}.ID,...,l_{k}.ID,t)$ with length at most $D$, where $n$ is an incoming neighbor of $l_{1}.ID$.  If we know the length of the path $(l_{1}.ID,...,l_{k}.ID,t)$, is $l_{1}.trackdistance$, then there exists such a path if and only if $d(s, n) + d(n, l_{1}.ID) + l_{1}.trackdistance \leq D$, or equivalently, $d(s, n) + d(n, l_{1}.ID) \leq D - l_{1}.trackdistance$.}
\label{fig:Step3}
\end{figure}

Denote the source and target nodes as $s$ and $t$ respectively.  In Step 3, \textbf{Pathfind} builds the path tree as illustrated in the second to fourth panels in Figure \ref{fig:pathspanel}.    For each recently added node, $l$, to the path tree, \textbf{Pathfind} identifies all  neighbors $n$ of $l.ID$ in the original graph such that $d(s,n)+d(n,l.ID)$ is {\em sufficiently small} as illustrated in Figure \ref{fig:Step3}.   More precisely,  suppose we know that there exists a path with length at most $D$ of the form $(s,?,...,?,l_{1}.ID,...,l_{k}.ID,t)$, where the nodes $s, l_{1}.ID,...,l_{k}.ID, t$ are fixed and there are no constraints on the nodes connecting $s$ to $l_{1}.ID$.  (In practice, they are unknown.)  We then wish to determine if there exists a path with length at most $D$ of the form $(s,?,...,?,n,l_{1}.ID,...,l_{k}.ID,t)$, where $n$ is an incoming neighbor of $l_{1}.ID$.  Furthermore, suppose that we record the length of the path $(l_{1}.ID,...,l_{k}.ID,t)$ as $l_{1}.trackdistance$.  Consequently, if there exists a path of the form $(s,?,...,?,n,l_{1}.ID,...,l_{k}.ID,t)$ with length at most $D$, it follows that $d(s, n) +d(n, l_{1}.ID) + l_{1}.trackdistance \leq D$, or equivalently 
\begin{equation} \label{eq:small} d(s,n) + d(n, l_{1}.ID) \leq D - l_{1}.trackdistance. 
\end{equation}
For each such neighbor $n$ that satisfies the above inequality, \textbf{Pathfind} adds a new node to the path tree, defining the attributes, \textbf{Parent},  \textbf{ID} and \textbf{trackdistance} accordingly.  In the event the neighbor $n$ under consideration is the \textbf{source}, \textbf{Pathfind} adds the corresponding node in the path tree to the set  \textbf{Pathstart}.

\noindent Finally, Step 4 extracts paths from the path tree, as described in Figure \ref{fig:pathconstruct}.  That is for each $treenode$, where $treenode.ID$ is the \textbf{source}, \textbf{Pathfind} traverses up the tree to the root to construct an explicit representation of a path from the  \textbf{source} to the \textbf{target}.  \textbf{Pathfind} then returns all such paths with length bounded by the presrcribed parameter $D$.
\begin{tcolorbox}
\hypertarget{alg1}{\textbf{Algorithm 1: Pathfind}}
\begin{enumerate}
\item (Sort adjacency list). For each node $x \in  V$, sort the nodes  $n \in InNbrs_{x}$ by $d(source,n)+d(n,x)$ in non-decreasing order.    We can then easily identify incoming neighbors of $x$ that are close to the source.
\item (Initialize path tree.) Initialize a path tree with the single node \textbf{root}.  We will construct a correspondence between paths on the path tree to almost shortest paths between the source and target in the original graph.
\begin{enumerate}
\item Define the following attributes for nodes in the path tree.  
\begin{enumerate}
\item \textbf{Parent} returns the parent of a node on the tree. Set $root.Parent$ $\gets \emptyset$. 
\item \textbf{ID} maps nodes in the tree to nodes in the graph.  Set $root.ID$ $\gets target$.
\item \textbf{trackdistance} tracks the distance traveled so far in the original graph.  
Set $root.trackdistance$ $\gets 0$.  
\end{enumerate} 
\item Initialize a set \textbf{PathStart} $\gets\emptyset$, where PathStart will contain all nodes $t$ in the path tree, such that $t.ID$ is the source.
\item Initialize a queue, $Q$, with node root. $Q$ will help us identify almost shortest paths between the source and target.
\end{enumerate}
\item (Search for almost shortest paths by constructing path tree). While $Q \neq \emptyset$, remove the first element $l$ from $Q$.
\begin{enumerate}
\item For each $n \in InNbrs_{l.ID}$, check if there exists a path from source to $l.ID$, where the last edge connects $n$ to $l.ID$ and the path has length at most $ D - l.trackdistance$.  If at any point we find an $n$ such that there does not exist such a path, we should exit this for loop as all remaining nodes in $InNbrs_{l.ID}$ are too far away from the source by Step 1.  Else, we should add a new node $z$ to the tree with the following attributes.

\begin{enumerate} 
\item $z.parent \gets l$
\item $z.ID \gets n$
\item $z.trackdistance \gets d(n,l.ID) + l.trackdistance$ and
\item $Q \gets Q \cup z$.
\item If $n == source$ then add $z$ to $PathStart$.
\end{enumerate}

\end{enumerate}
\item (Construct explicit representation for almost shortest paths). Initialize the output, the list of almost shortest paths, $PathList \gets \emptyset$.   Then for each $v \in PathStart$, find the shortest path from $v$ to the root in the tree path by using the parent attribute as mentioned in Figure \ref{fig:pathconstruct}.  
\begin{enumerate}
\item Denote the path in the path tree as $(v,v_{1}....,v_{k})$, where $v_{k}$ is the root.  Then the corresponding path in the original graph is $(v.ID,v_{1}.ID,....,v_{k}.ID)$.  Note that by construction, $v.ID$ is the source and $v_{k}.ID = root.ID$ is the target.  
\item $PathList \gets PathList \cup (v.ID,v_{1}.ID,....,v_{k}.ID)$.
\end{enumerate}
\item Return $PathList$.
\end{enumerate}
\end{tcolorbox}

\subsection{Complexity Analysis}
We now verify the claimed computational complexity of the algorithm for identifying all paths between two fixed nodes of length bounded by $L$. 

\begin{thm}
Denote $d_{in}(x)$ as the in-degree of node $x$, let $V$ be the set of all nodes in the graph and $m$ be the total number of edges in the graph.   If $\sum_{x \in V} d_{in}(x)\log d_{in}(x) = O(m)$, then the time complexity for the \hyperlink{alg1}{Pathfind} algorithm (in section \ref{sub:alg}) is $O(m+kL)$, where $k$ is the number of shortest paths in the output, $n$ is the number of nodes in $V$ and $L$ is an upperbound for the number of nodes in any outputted path.  Otherwise, the computational complexity is $O(m\log m+n+kL)$
\end{thm}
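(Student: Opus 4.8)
The plan is to bound the running time contribution of each of the four steps of \hyperlink{alg1}{Pathfind} separately, the crux being a bound on the total number of nodes in the path tree constructed in Step 3. Step 1 sorts, for each node $x$, its list of $d_{in}(x)$ incoming neighbors, which by a comparison sort costs $O\!\left(\sum_{x\in V} d_{in}(x)\log d_{in}(x)\right)$, plus $O(n)$ to enumerate $V$ (which is $O(m)$ once we discard isolated nodes, since \textbf{Pathfind} never visits them and every remaining node accounts for at least one edge). Under the hypothesis $\sum_{x} d_{in}(x)\log d_{in}(x) = O(m)$ this is $O(m)$; in general, using $d_{in}(x)\le m$ we bound it by $\sum_x d_{in}(x)\log m = m\log m$, giving $O(n+m\log m)$. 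Step 2 is $O(1)$.

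The key lemma is that the path tree built in Step 3 has at most $kL+1$ nodes. I would prove this by showing that every node $z$ added to the tree is an ancestor of some node of \textbf{PathStart}. By the admission rule \eqref{eq:small}, a node $z$ with $z.ID=n$ is appended below $l$ only when $d(source,n)+d(n,l.ID)+l.trackdistance\le D$; since $z.trackdistance = d(n,l.ID)+l.trackdistance$, this says exactly $z.trackdistance + d(source, z.ID) \le D$ (and $root.trackdistance + d(source,root.ID) = d(source,target)\le D$ whenever any path is output). Now fix such a $z$ and a shortest path $z.ID=w_0, w_1,\dots, w_p=source$ from $source$ to $z.ID$ written in reverse, so that $w_{i+1}\in InNbrs_{w_i}$ and $d(source,w_i)=\sum_{j=i}^{p-1} d(w_{j+1},w_j)$. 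A short telescoping computation shows that, for each $i$, the admission inequality \eqref{eq:small} for appending $w_{i+1}$ below the tree node carrying $w_i$ collapses exactly to $z.trackdistance + d(source,z.ID)\le D$, which already holds; hence the entire reversed shortest path is realized as a descending chain of tree nodes rooted at $z$ and ending at a tree node whose \textbf{ID} is $source$, which Step 3 places in \textbf{PathStart}. Therefore the node set of the tree is contained in $\bigcup_{v\in\textbf{PathStart}}(\text{root-to-}v\text{ path})$; each such path has at most $L$ nodes, since it spells out an output path and $L$ bounds the number of nodes in any output path, and there are $k=|\textbf{PathStart}|$ of them.

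Granting this, Steps 3 and 4 are routine accounting. Because every in-neighbor list was sorted in Step 1, when \textbf{Pathfind} dequeues a tree node $l$ it scans $InNbrs_{l.ID}$ only until the first neighbor that violates \eqref{eq:small}, performing $O(1)$ work per child it creates (each lookup $d(source,\cdot)$, each edge weight $d(n,\cdot)$, and each comparison being $O(1)$) plus $O(1)$ for the terminating comparison; summing over the tree, $\sum_{l}(\#\mathrm{children}(l)+1) = 2\,|\text{tree}|-1 = O(kL)$, so Step 3 runs in $O(kL)$ time. Step 4 walks from each of the $k$ nodes of \textbf{PathStart} up to the root along at most $L$ \textbf{Parent} pointers and emits the corresponding path, for $O(kL)$ total. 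Adding the four bounds yields $O(m+kL)$ under the hypothesis and $O(n+m\log m+kL)$ in general.

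I expect the telescoping lemma of the second paragraph — that no node of the path tree is wasted, i.e.\ that every tree node lies on a realized root-to-$source$ branch — to be the only genuine obstacle; it is exactly what stops the tree from being exponentially larger than the output, and it hinges on positivity of the weights (so that shortest paths are finite and simple) together with the precise form of the threshold $D - l_1.trackdistance$ in \eqref{eq:small}. Everything else reduces to summing $O(1)$ costs over the $O(kL)$ tree nodes and the $O(n+m)$ or $O(n+m\log m)$ preprocessing.
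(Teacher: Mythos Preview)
Your proof is correct and follows the same step-by-step decomposition as the paper, with the same key lemma that the path tree has $O(kL)$ nodes because every tree node lies on some root-to-\textbf{PathStart} branch of length at most $L$. Your telescoping argument showing that every tree node $z$ necessarily has a \textbf{PathStart} descendant is in fact more rigorous than the paper's version, which simply asserts (by reference to the figures) that every leaf of the tree has \textbf{ID} equal to the source and proceeds from there.
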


\begin{proof}

Step 1 in \hyperlink{alg1}{Pathfind} sorts the incoming neighbors for each node $x$, $InNbrs_{x}$.  Denote $d_{in}(x)$ as the number of incoming neighbors for node $x$ and let $V$ be the set of all vertices.  By using a heapsort, \hyperlink{alg1}{Pathfind} can sort the sets $InNbrs_{x}$ for all $x$ in  $\sum_{x \in V} O(d_{in}(x)\log d_{in}(x))$ time.  This quantity is trivially bounded by $\sum_{x \in V} O(d_{in}(x)\log m) = O(m\log m)$, as the maximum in-degree is bounded by the number of edges.  Alternatively, for many real world networks,  $\sum_{x \in V}d_{in}(x)\log d_{in}(x) = O(m)$.  Hence it would follow that if either the graph is nicely weighted or if $\sum_{x \in V}d_{in}(x)\log d_{in}(x) = O(m)$, then Step 1 takes $O(m)$ time.  Otherwise, Step 1 takes $O(m \log m )$ time. 

Step 2 takes constant time $O(1)$ as Pathfind initializes the path tree.

For Step 3, note that the complexity for evaluating the criteria to determine whether we add a node to our path tree is $O(1)$.  Furthermore, for every neighbor of $l.ID$ that satisfies the criteria must yield at least one path.  And since $InNbrs_{l.ID}$ is sorted, there is only an $O(1)$ penalty when Pathfind comes across a neighbor that is not sufficiently close to node $s$ in the graph as all other neighbors that  have not been checked are too far to construct an almost shortest path.  Since the time complexity for adding a node or edge to the path tree is $O(1)$, \textbf{the complexity of Step 3 is proportional to the number of nodes and edges in the tree}.  

We claim that the number of nodes and edges in the tree is $O(kL)$.  As illustrated in panel 4 in Figure \ref{fig:pathspanel} and in Figure \ref{fig:pathconstruct}, all leafs $z$ in the path tree have the property that $z.ID$ maps to node $s$ in the original graph.  Similarly, the root of the tree corresponds to node $t$ in the original graph.  In particular, for any leaf, the shortest path from that leaf to the root corresponds to an outputted almost shortest path from $s$ to $t$.  

Since we assumed that outputted paths from $s$ to $t$ contain at most $L$ nodes, then there are at most $L$ nodes on the shortest path between any leaf and the root.  Note that every node in the path tree appears in at least one shortest path between a leaf and the root.  As each leaf in the tree corresponds to a different outputted almost shortest path and Pathfind outputs $k$ paths, there are at most $k$ leaves in the tree.  Consequently, the number of nodes in the path tree is $O(kL)$. Furthermore, since the number of edges in a tree equals the number of nodes minus one, we conclude that the number of edges is also $O(kL)$.  Hence the complexity of Step 3 is $O(kL)$.

Finally in Step 4, the time to reconstruct explicit representations for each path is $O(L)$ as the shortest path from any node to the root contains at most $L$ nodes by assumption.  And since Pathfind outputs at most $k$ paths this step also costs $O(kL)$.  The combined computational complexity of the algorithm is therefore $O(m + kL)$ or $O(m\log m + kL)$ depending on the assumptions from Step 1.  

\end{proof}  

\noindent \textbf{Remark:} Many real world networks exhibit degree sequences that appear to follow a scale free distribution \cite{wang2003complex,chakrabarti2006graph,barabasi2009scale}, that is $Pr(d_{in}(x) = r) \propto r^{-\gamma}$, for some positive value of $\gamma$. (Typically $\gamma$ is greater than 2.).  Let $n$ be the number of nodes in the network.  If for instance $\gamma > 2$, it follows that, \begin{equation} \sum_{x \in V} d_{in}(x)\log d_{in}(x) \approx n\int_{1}^{n}Pr(d_{in}(x) = r)r\log (r) dr\propto n\int_{1}^{n}r^{-\gamma + 1}\log (r)dr, \end{equation}

where $\int_{1}^{n}Pr(d_{in}(x) = r)r\log (r) dr$ is roughly the expected value for the in-degree times the logarithm of the in-degree. Notice that we integrate up to $n$ as the in-degree of a node cannot exceed the number of nodes in the network.  

Using integration by parts it follows that  $$n\int_{1}^{n}r^{-\gamma + 1}\log (r) dk = O(\frac{n}{(\gamma - 2)^{2}}).$$  For networks that exhibit a scale free distribution with $\gamma >2$ and $ n < m$, we conclude that $$\sum_{x \in V}d_{in}(x)\log d_{in}(x) = O(m).$$ 

Now that we have verified that the time complexity for Pathfind, we now consider the space complexity.
\begin{lem}
The space complexity for Pathfind is $O(kL)$.
\end{lem}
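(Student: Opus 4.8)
The plan is to charge the working memory of \hyperlink{alg1}{Pathfind} to its four steps and observe that the path tree (equivalently, the output list) dominates everything else. First I would dispatch the cheap steps: Step 1 sorts each list $InNbrs_{x}$ in place, e.g.\ with heapsort, so it needs only $O(1)$ scratch space beyond the adjacency lists given as input; and Step 2 allocates a constant number of objects, hence is $O(1)$ as well. Under the usual convention that the read-only input (the adjacency lists, the weights, and the distance function) is not charged for — which is exactly what makes a bound smaller than $O(m)$ meaningful — these steps contribute nothing asymptotically.

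Next I would bound the path tree produced in Step 3. Here I would reuse the counting argument from the proof of the preceding theorem, which shows that the completed path tree has $O(kL)$ nodes and $O(kL)$ edges: every node added in Step 3 satisfies inequality (\ref{eq:small}) relative to its parent, hence its \textbf{ID} lies on an outputted path from the \textbf{source} to the \textbf{target} and it sits at depth at most $L$; and there are at most $k$ leaves, one per outputted path. The only extra observation I need is that Step 3 is monotone — it only inserts nodes and edges into the tree and never removes any — so the tree is largest at termination, and therefore the $O(kL)$ bound in fact holds at every instant of the execution.

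It then remains to bound the remaining containers against $O(kL)$. Each tree node is pushed onto the queue $Q$ exactly once, at the moment it is created, and is later popped, so at every instant $Q$ holds a subset of the current tree nodes and has size $O(kL)$. The set \textbf{PathStart} gains one element for each tree node whose \textbf{ID} is the \textbf{source}, i.e.\ one per leaf, so $|\textbf{PathStart}| \le k$. Finally, Step 4 walks parent pointers to reconstruct each path using $O(1)$ auxiliary space per walk, appending the result to \textbf{PathList}; since \textbf{PathList} stores $k$ paths of at most $L$ nodes each, it occupies $O(kL)$ space — and since this is also the size of the returned answer, it is unavoidable. Summing the contributions of the four steps yields total space $O(kL)$.

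I expect the only point requiring care to be the passage from ``the final tree has $O(kL)$ nodes'' (which the preceding theorem already gives) to ``the tree has $O(kL)$ nodes throughout,'' together with the companion claim that $Q$ never outgrows the tree; both reduce to the single structural fact that Step 3 is append-only. The secondary modeling choice — not charging for the input, or equivalently assuming the Step 1 sort is in place — should be stated explicitly, since otherwise storing the input alone already costs $\Theta(m)$.
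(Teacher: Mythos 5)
Your proof is correct and follows essentially the same route as the paper's: Step 1 sorts in place with $O(1)$ auxiliary space, and the path tree's $O(kL)$ node bound (inherited from the time-complexity theorem) dominates everything else. Your additional bookkeeping for $Q$, \textbf{PathStart}, and \textbf{PathList}, and the observation that the tree grows monotonically, are careful refinements of details the paper leaves implicit rather than a different argument.
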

\begin{proof}
From Step 1, sorting takes $O(1)$ additional space. From Steps 2-5, constructing the path tree takes $O(kL)$ space as there are at most $kL$ nodes in the tree.  Consequently, Pathfind has $O(kL)$ space complexity. 
\end{proof}

With the space and time complexity results for Pathfind at hand, we verify that Pathfind does indeed find all paths of length bounded by $L$.

\begin{thm}
\hyperlink{alg1}{Pathfind} finds all paths of length bounded by $D$ from node $s$ to node $t$.
\end{thm}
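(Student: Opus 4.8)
The plan is to establish two things: \emph{soundness}, that every sequence \textbf{Pathfind} returns is a walk in $G$ from $s$ to $t$ of weighted length at most $D$, and \emph{completeness}, that every such walk is returned (here a walk is a sequence of nodes with consecutive nodes adjacent, with no simplicity restriction — \textbf{Pathfind} enforces none — so that, in particular, every simple path of length $\le D$ is such a walk and completeness gives the stated theorem). Both directions rest on a single correspondence lemma, proved by induction on the order in which nodes are inserted into the path tree: for any node $z$ ever inserted, reading the $ID$ attributes along the path from $z$ up to \textbf{root} yields a walk in $G$ from $z.ID$ to $t$ whose weighted length is $z.trackdistance$. The root contributes the trivial walk $(t)$ of length $0$; and when $z$ is inserted as a child of $l$ with $z.ID = n \in InNbrs_{l.ID}$, prepending $n$ to the walk attached to $l$ is still a walk because $(n, l.ID)$ is an edge, and the assignment $z.trackdistance \gets d(n, l.ID) + l.trackdistance$ adds exactly that edge's weight.

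Given the lemma, soundness is short: each returned sequence comes from some node $v$ of \textbf{PathStart}, so $v.ID = s$, and it is the walk attached to $v$, which ends at $t$ by the lemma; moreover $v$ was inserted only after the test \eqref{eq:small} passed for the neighbor $n = s$, which (using $d(s,s)=0$) rearranges to $v.trackdistance = d(s, l.ID) + l.trackdistance \le D$, so the walk has length at most $D$ (the degenerate case $s=t$ is handled separately). For completeness, fix a walk $W = (s = v_0, v_1, \dots, v_r = t)$ with $\ell(W) := \sum_{i=0}^{r-1} d(v_i, v_{i+1}) \le D$, and write $\delta_i := \sum_{j=i}^{r-1} d(v_j, v_{j+1})$ for the length of the suffix $(v_i, \dots, v_r)$. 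I would show by downward induction on $i$, from $i=r$ to $i=0$, that the path tree contains a node $z_i$ whose attached walk is exactly $(v_i, v_{i+1}, \dots, v_r)$; the base case $i = r$ is the root. Once this is done, $z_0.ID = v_0 = s$ forces $z_0$ into \textbf{PathStart} at insertion, and Step 4 appends its attached walk, namely $W$, to $PathList$.

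The inductive step is where the argument has content. Given $z_{i+1}$, the lemma gives $z_{i+1}.trackdistance = \delta_{i+1}$, and $z_{i+1}$ is eventually dequeued in Step 3 (termination is discussed below); there $v_i \in InNbrs_{z_{i+1}.ID}$ since $(v_i, v_{i+1})$ is an edge, and the test \eqref{eq:small} for $n = v_i$ asks whether $d(s, v_i) + d(v_i, v_{i+1}) \le D - \delta_{i+1}$. This holds because the prefix of $W$ from $v_0$ to $v_i$ is a walk from $s$ to $v_i$ and hence has length at least $d(s, v_i)$, so $D \ge \ell(W) \ge d(s, v_i) + d(v_i, v_{i+1}) + \delta_{i+1}$. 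The one point that genuinely needs care — and which I expect to be the main obstacle — is the early exit in Step 3(a): we must guarantee $v_i$ is examined before \textbf{Pathfind} bails out of the loop over $InNbrs_{z_{i+1}.ID}$. This is exactly what the sorting in Step 1 buys us: because $InNbrs_{z_{i+1}.ID}$ is ordered by non-decreasing $d(s, n) + d(n, z_{i+1}.ID)$, every neighbor examined before the first one that fails the test also satisfies the test, so $v_i$ — which we just showed passes — is examined before any exit, and a child $z_i$ with $z_i.ID = v_i$ is created. A companion subtlety to keep straight is that $trackdistance$ accumulates genuine edge weights along the tree path while \eqref{eq:small} is phrased with the shortest‑path distances $d(s,\cdot)$; the inequality $\ell(\mathrm{prefix}) \ge d(s, v_i)$ is precisely the bridge between the two. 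Finally, termination of the while‑loop (so that each $z_{i+1}$ is indeed dequeued) follows since along any root‑to‑node path the $trackdistance$ values strictly increase and stay $\le D$, giving bounded depth and hence a finite tree.
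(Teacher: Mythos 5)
Your proposal is correct and follows essentially the same route as the paper's own proof: soundness via the correspondence between tree edges and graph edges, and completeness via induction from the target end of the walk, with the key observation that the shortest-path distance $d(s,v_i)$ lower-bounds the length of the prefix so that the test (\ref{eq:small}) must pass. You are in fact somewhat more careful than the paper, which leaves implicit both the early-exit issue in Step 3(a) (that sorting guarantees the needed neighbor is examined before the loop bails out) and termination of the while-loop.
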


\begin{proof}
By construction of the algorithm if two nodes $x$ and $y$ in the path tree are neighbors, then $x.ID$ and $y.ID$ are neighbors in the original graph $G$.  Consequently, it follows that for any path $P = (v_{1},...,v_{k})$ in the path tree, where $v_{1}.ID$ is the source and $v_{k}.ID$ is the target, then $(v_{1}.ID,...,v_{k}.ID)$ is a path from the source, s, to the target, t. Hence, all paths returned by Pathfind are paths from $s$ to $t$.

Alternatively for any path in the original graph $(n_{1},....,n_{k})$, with length bounded by $D$, where $n_{1}=s$, the source  and $n_{k}=t$, the target, we need to show that there is a corresponding path in the path tree.  Inductively starting with the node $n_{k}$, let $v_{k}$ be the root of the tree and it follows that  $v_{k}.ID = n_{k}$. Now by construction, since $d(n_{k-1},n_{k}) + d(n_{1},n_{k-1}) \leq D$, as $(n_{1},....,n_{k})$ has length bounded above by $D$, it follows that there is a child $v_{k-1}$ of the root ($v_{k}$), where $v_{k-1}.ID = n_{k-1}$. 

Furthermore by construction, it follows that there is a unique child of $v_{k-1}$, $v_{k-2}$, where $v_{k-2}.ID = n_{k-2}$, as $d(n_{1},n_{k-2}) + d(n_{k-1},n_{k-2}) \leq D - d(n_{k-1},n_{k})$, where $d(n_{k-1},n_{k}) = v_{k-1}.trackdistance$.  Proceeding inductively, we conclude that for all $j \in \{1,...,k\}$, there is a node $v_{j}$ in the path graph with the properties that if $j < k$, $v_{j}.Parent = v_{j+1}$ and $v_{j}.ID = n_{j}$.  Furthermore, since $v_{1}.ID = s$, the source, this implies that $v_{1} \in PathStart$.   Consequently, there is a bijection between the paths from $s$ to $t$ with length at most $D$ and the paths from nodes in $PathStart$ to the root of the path tree. 
\end{proof}

\section{The Ratio of Simple to Nonsimple Paths in Chung-Lu Random Graphs}
Intuitively, since real world networks are locally tree like, to efficiently identify almost shortest {\em{simple}} paths, we should use an {\em almost shortest path} algorithm.  Consequently, we employ the Chung-Lu random graph model as a convenient method for constructing a collection of graphs that emulate properties of real world networks.  To this end, we seek conditions for realizations of the Chung-Lu random graph model, where the number of simple paths of fixed length is roughly the same as the number of paths of that length.  Unfortunately, for many {\em{undirected} }random graphs with nodes of large degree, the aforementioned claim is false \cite{castellano2010thresholds}; short non-simple paths in undirected graphs can outnumber simple paths by considering paths that revisit nodes of large degree.  To circumvent this issue, we introduce nonbacktracking paths,where paths cannot traverse the same edge twice in a row.  After illustrating that under a broad range of parameters that the number of nonbacktracking paths asymptotically approximates the number of simple paths of the same length (Corollary \ref{cor:main}), we then demonstrate how to adapt the almost shortest paths algorithm in the prior section to compute the almost shortest nonbacktracking paths with the same computational complexity, if the number of edges $m$ exceeds the number of nodes (Lemma \ref{lem:QSPcomp}).

\begin{defn}\textbf{Chung-Lu Random Graph Model \cite{chung2002average}}: Let $n$ be the number of nodes in an undirected graph and let $\mathbf{d} = (d_{1},...,d_{n})$ be the expected degree sequence, where $d_{i}$ corresponds to the expected degree of node $i$. Denote $S = \sum_{i=1}^{n}d_{i}$ and suppose that $\max_{i} d_{i}^{2} \leq S$.  We then model edges in the graph as independent Bernoulli random variables.  In particular, we denote the probability an edge exists connecting nodes $i$ and $j$ as $p_{ij}$, where $p_{ij}=\frac{d_{i}d_{j}}{S}$.  
\end{defn}

As a technical point in the above definition, nodes may have edges that connect to themselves. But before introducing any subsequent results regarding the Chung-Lu random graph model, the following notation will be helpful.
\begin{defn}Define the random variable, $SP_{r}(s,t)$, to be the number of simple paths from node $s$ to $t$ with length $r$.  
\end{defn}

To calculate the number of simple paths in the graph, we employ Hoare-Ramshaw notation for a closed set of integers, namely $$[a..b] = \{x\in \mathbb{Z}: a\leq x \leq b\}.$$  At this juncture, we show that in expectation the number of simple paths between any two nodes grows exponentially for Chung-Lu random graphs.  
\begin{lem}\label{lem:exp}
For the Chung-Lu random graph model, define $S_2 = \sum_{i=1}^{n} d_{i}^{2}$ and $d_{max} = \max_{i} d_{i}$.   Consider the expected number of simple paths of length $r$ from node $s$ to $t$, $E[SP_{r}(s,t)]$.  Furthermore, let $p_{max} = \frac{d_{max}^{2}}{S}$.  It then follows that

$$p_{st}\frac{S_2}{S}^{r-1}(1-\frac{r(r+1)p_{max}}{2}\frac{S}{S_2}) \leq E[SP_{r}(s,t)] $$

where $p_{st}$ is the probability that an edge exists connecting the nodes $s$ and $t$.  
\end{lem}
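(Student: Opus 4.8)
The plan is to compute $E[SP_r(s,t)]$ by expanding over all potential vertex sequences and then using linearity of expectation together with independence of edges. Write $SP_r(s,t) = \sum_{(v_1,\dots,v_{r-1})} \prod_{\text{edges}} \mathbf{1}[\text{edge present}]$, where the sum ranges over ordered tuples of \emph{distinct} internal vertices $v_1,\dots,v_{r-1}$, all distinct from $s$ and $t$, and the product of indicator variables is over the $r$ edges $s v_1, v_1 v_2, \dots, v_{r-1} t$. Since these $r$ edges are on distinct vertex pairs (the path is simple) and edges are independent Bernoulli in the Chung-Lu model, the expectation of each product factorizes as $p_{s v_1} p_{v_1 v_2}\cdots p_{v_{r-1} t}$. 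Substituting $p_{ij} = d_i d_j / S$, each term telescopes to $\frac{d_s d_t}{S}\cdot\frac{1}{S^{r-1}} d_{v_1}^2 d_{v_2}^2 \cdots d_{v_{r-1}}^2 = p_{st} \cdot S^{-(r-1)} \prod_{j=1}^{r-1} d_{v_j}^2$.

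Next I would pass from the sum over \emph{distinct} internal vertices to the sum over \emph{all} internal vertex tuples (an over-count) and bound the difference. The unrestricted sum $\sum_{v_1,\dots,v_{r-1}} \prod_j d_{v_j}^2 = \big(\sum_i d_i^2\big)^{r-1} = S_2^{r-1}$, which gives the leading term $p_{st} (S_2/S)^{r-1}$. The error consists of tuples where some coincidence occurs: either two internal vertices agree, $v_a = v_b$, or some internal vertex equals $s$ or $t$. For each fixed pair of positions $a<b$ among the $r-1$ internal slots (and similarly for collisions with $s$ or $t$, treating those as fixed "positions"), the contribution where those two coordinates are forced equal is at most $\big(\max_i d_i^2\big) \cdot S_2^{r-2} = p_{max} S \cdot S_2^{r-2}$ — since forcing $v_a = v_b = i$ replaces a factor $d_i^2 \cdot (\sum d^2)$ by $d_i^4 \le d_{max}^2 d_i^2$, and summing over $i$ gives $d_{max}^2 S_2^{r-2}$, with $d_{max}^2 = p_{max} S$. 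Counting the number of such "colliding position pairs": there are $\binom{r-1}{2}$ pairs of internal positions plus $2(r-1)$ pairs involving $s$ or $t$, for a total of $\binom{r-1}{2} + 2(r-1) = \frac{(r-1)(r+2)}{2}$; a crude bound $\le \frac{r(r+1)}{2}$ suffices to match the stated form. Therefore
\begin{equation}
E[SP_r(s,t)] \ge p_{st}\, S^{-(r-1)}\Big( S_2^{r-1} - \tfrac{r(r+1)}{2}\, p_{max} S\, S_2^{r-2}\Big) = p_{st}\Big(\tfrac{S_2}{S}\Big)^{r-1}\Big(1 - \tfrac{r(r+1)p_{max}}{2}\cdot\tfrac{S}{S_2}\Big),
\end{equation}
which is the claimed lower bound.

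The main obstacle is the inclusion–exclusion bookkeeping in the second step: making sure that every tuple with \emph{any} collision is accounted for, that each is subtracted at most once (or that an over-subtraction only strengthens a lower bound, which is what I will actually exploit — I only need an upper bound on the total error mass, so I can union-bound over collision events rather than do exact inclusion–exclusion), and that the per-event bound $p_{max} S\, S_2^{r-2}$ is uniform regardless of which positions collide and whether longer chains of coincidences occur. One subtlety worth stating carefully is that when three or more coordinates coincide the term is even smaller, so bounding it by the "two coincide" estimate is safe; and self-loops (allowed in this model) at $s$ or $t$ are irrelevant since internal vertices are what we restrict. I expect the telescoping computation and the factorization via independence to be entirely routine; the only place to be deliberate is choosing the cleanest counting of collision patterns so the constant comes out as $\frac{r(r+1)}{2}$ as stated.
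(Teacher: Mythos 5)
Your proposal is correct and follows essentially the same route as the paper's proof: expand the expectation over tuples of distinct internal vertices, factor by edge independence, pass to the unrestricted sum to get $p_{st}(S_2/S)^{r-1}$, and union-bound the collision terms, each contributing at most $p_{max}(S_2/S)^{r-2}$ after normalization, with the pair count bounded by $\tfrac{r(r+1)}{2}$ (the paper counts $\binom{r+1}{2}$ pairs among all $r+1$ path nodes; your count of $\binom{r-1}{2}+2(r-1)$ is marginally tighter but lands on the same stated constant). No gaps.
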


\begin{proof}   Define the set $B_{st}$ such that $\mathbf{b} \in B_{st}$ if $\mathbf{b} = (s,b_{1},...,b_{r-1},t)\in \mathbb{N}^{r+1}$, where each entry in $\mathbf{b}$ is distinct and  $b_{i}\in [1..n]$ for all $i\in[1..r-1]$.  Informally, $B_{st}$ consists of all simple paths from $s$ to $t$ of length $r$ that could exist in a graph of $n$ nodes.  It then follows that the expected the number of  simple paths of length $r$ between nodes $s$ and $t$ is the sum of probabilities that a given simple path from $s$ to $t$ exists. 

$$E[SP_{r}(s,t)] =\sum_{\mathbf{b}\in B_{st}} p_{sb_1}(\Pi_{k=1}^{r-2}p_{b_{k}b_{k+1}})p_{b_{r-1}t},$$

Using the probabilities that two nodes share an edge in the Chung-Lu random graph model, we can rewrite the above expression.

\begin{equation} E[SP_{r}(s,t)] =\sum_{\mathbf{b}\in B_{st}} \frac{d_{s}d_{b_1}}{S}(\Pi_{k=1}^{r-2}\frac{d_{b_k}d_{b_{k+1}}}{S})\frac{d_{b_{r-1}}d_{t}}{S}
\end{equation}

Noticing that for each index $i$ from $1$ to $r-1$, the term $d_{b_{i}}$ appears twice in the product, we will argue that the following inequality holds.

\begin{multline} E[SP_{r}(s,t)] =  \frac{d_{s}d_{t}}{S}\sum_{\mathbf{b}\in B_{st}}\Pi_{k=1}^{r-1}\frac{d_{b_k}^{2}}{S}\geq \\ 
p_{st}(\sum_{b_1=1,...,b_{r-1}=1}^{n}\Pi_{k=1}^{r-1}\frac{d_{b_k}^{2}}{S} - \binom{r+1}{2}\sum_{\substack{b_2=1,...,b_{r-1}=1}}^{n}\frac{d_{max}^{2}}{S}\Pi_{k=2}^{r-2}\frac{d_{b_k}^{2}}{S}),
\end{multline}

\noindent where $d_{max} = \max \mathbf{d}$ and we derive the last inequality by inclusion-exclusion; we consider the contribution of the summation by removing the constraint of the distinctness of the $b_{i}$ terms and then we subtract off terms where the $b_{i}$ either equal $s$, $t$, or another $b_{j}$.
To compute the quantity we should substract off, we first consider the contribution where $b_{1} = b_{2}$ and then multiply that quantity by $\binom{r+1}{2}$, corresponding to the number of ways any two of the $r+1$ nodes in the path could equal each other and hence violate the constraints in the original summation. It then follows that 

$$E[SP_{r}(s,t)] \geq p_{st}(\frac{S_2}{S}^{r-1} - \binom{r+1}{2}\sum_{\substack{b_2,...,b_{r-2}}}p_{max}\Pi_{k=2}^{r-2}\frac{d_{b_k}^{2}}{S})\geq$$ $$p_{st}(\frac{S_2}{S}^{r-1} - \binom{r+1}{2}p_{max}\frac{S_2}{S}^{r-2})=p_{st}\frac{S_2}{S}^{r-1}(1-\frac{r(r+1)p_{max}}{2}\frac{S}{S_2}).$$

\end{proof}
As a result of Lemma \ref{lem:exp}, when calculating the $k$ shortest paths, the expected number of simple paths grows exponentially in terms of length.  Consequently, we may be arbitrarily or perhaps even systematically ignoring many paths of the same length.  For this reason in many applications, it is often more informative to calculate all paths bounded by a fixed length as opposed to calculating just $k$ of them.  Since we wish to show that the ratio between the number of simple paths and (nonbacktracking) paths of the same length is well behaved, we seek bounds for the expected number of (nonbacktracking) paths of length $r$ between two arbitrary nodes; however, since the same edge may appear multiple times on a path, we first provide an efficient method for computing the probability that an arbitrary path exists.  To do so, we will need the following definitions.

\begin{defn}
Consider an edge in a given path.  If the edge has not appeared before, that edge is a \textbf{new edge}.  Alternatively, if the edge has appeared before, that edge is a \textbf{repeating edge}.  Furthermore, a list of consecutive repeating  edges of maximal size in a path is called a \textbf{repeating edge block}.  We can define a \textbf{new edge block} similarly, as a list of consecutive new edges of maximal size in a path.  The \textbf{new edge interior}  is a list of nodes that includes the mth node in the path if the incoming edge to the mth node and the outgoing edge from the mth node are new edges. Note that the new edge interior excludes the first and last nodes in the path. 
\end{defn}
In order to construct a convenient formula for computing the probability that a given path exists, it will be helpful to identify which nodes appear elsewhere in the path.  
\begin{lem}\label{lem:pathrestrict}
Let $x$ be a node in an undirected graph that appears in a repeating edge block of a path and is not the first node in that repeating edge block.  Then one of the following must be true:
\begin{itemize}
\item $x$ must also be the first node in the path
\item $x$ must also appear in the new edge interior
\item or $x$ appears earlier in the path as the first node of a repeating edge block.
\end{itemize}
\end{lem}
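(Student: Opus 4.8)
The plan is to trace the edge immediately before node $x$ in the repeating edge block and ask where its "first occurrence" was. Let the path be $P = (n_1, \dots, n_{r+1})$, and suppose $x = n_m$ lies in a repeating edge block but is not its first node. Then the incoming edge to $x$, call it $e = \{n_{m-1}, n_m\}$, is a repeating edge (this is exactly what it means for $x$ to be in the block but not at its start: both the edge entering $x$ and, since $x$ is interior to the block, possibly also later edges are repeating; in any case $e$ is a repeating edge). Since $e$ is repeating, $e$ appeared earlier in $P$ as some edge $\{n_{j}, n_{j+1}\}$ with $j < m-1$, and because the graph is undirected, $\{n_j, n_{j+1}\} = \{n_{m-1}, n_m\}$, so $x = n_m$ equals either $n_j$ or $n_{j+1}$. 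Thus $x$ occurs earlier in the path at some index $i \le j+1 < m$. The whole argument then reduces to a case analysis on \emph{what kind of node} that earlier occurrence $n_i$ is, classified by the types (new vs.\ repeating) of its incoming edge $\{n_{i-1}, n_i\}$ and outgoing edge $\{n_i, n_{i+1}\}$ — with appropriate degenerate handling when $i = 1$.

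First I would handle the boundary case: if the earliest such earlier occurrence has index $i = 1$, then $x = n_1$ is the first node of the path, giving the first bullet. (One should take $i$ to be the \emph{first} index at which $x$ appears, so that $i=1$ is genuinely possible and otherwise $i > 1$ has both neighbors $n_{i-1}, n_{i+1}$ defined along the path.) So assume $i > 1$. Now split on the outgoing edge $\{n_i, n_{i+1}\}$: if it is a new edge, I would further split on the incoming edge $\{n_{i-1}, n_i\}$. If that incoming edge is also new, then by definition $n_i = x$ lies in the new edge interior — the second bullet. If the incoming edge $\{n_{i-1}, n_i\}$ is a repeating edge while the outgoing one is new, then $n_i$ is the \emph{last} node of a repeating edge block, not the first; here I would push the argument one more step — since $\{n_{i-1}, n_i\}$ is repeating it occurred still earlier, forcing yet an earlier occurrence of $x$, and then induct downward on the index (a minimal-counterexample / well-ordering argument on the position of the earliest occurrence of $x$ handles this cleanly). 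The remaining case is that the outgoing edge $\{n_i, n_{i+1}\}$ is itself a repeating edge, so $n_i = x$ lies in some repeating edge block; if $n_i$ is the first node of that block we are in the third bullet, and if not, we again have a strictly earlier occurrence of $x$ and descend.

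The cleanest packaging is therefore: \emph{induct on the index of the earliest occurrence of $x$ in $P$.} If that earliest occurrence is $n_1$, bullet one. Otherwise, looking at that occurrence $n_i$ with $i>1$: it is a new-edge-interior node (bullet two), or the first node of a repeating edge block (bullet three), or it lies in a repeating edge block without starting it / or is the terminal node of a repeating edge block — and in those last sub-cases the repeating incoming edge forces an even earlier occurrence of $x$, contradicting minimality of $i$ (equivalently, we have reduced to a strictly smaller instance). I expect the main obstacle to be bookkeeping rather than depth: one must be careful that "$x$ appears in a repeating edge block but is not its first node" correctly translates to "the edge entering $x$ at that position is a repeating edge," and careful about degenerate positions — $x$ being the last node of the whole path, or a repeating edge block of length one, or the earlier occurrence $n_i$ having $n_{i+1}$ past the end — so that every edge referenced actually exists. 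The undirectedness is used in exactly one place (equating $\{n_{m-1},n_m\}$ with its earlier appearance as an unordered pair) and should be flagged there, since the analogous statement can fail for directed graphs.
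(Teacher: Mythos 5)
Your proof is correct and follows essentially the same route as the paper's: trace the repeating edge entering $x$ back to its first (new) appearance to obtain an earlier occurrence of $x$, then classify that occurrence by the new/repeating status of its incident edges, with a minimality/descent argument dispatching the sub-case where that occurrence again lies (non-initially) in a repeating edge block. The only cosmetic difference is that you minimize over occurrences of the node $x$ itself (so the incoming edge at the earliest occurrence is automatically new), whereas the paper minimizes over positions violating the lemma; both arguments close identically.
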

\begin{proof}
Suppose that there exists a path $P$ that contradicts the lemma.  In particular, consider the first node, $x$ in the path, that contradicts the lemma statement. Suppose that this first contradiction appears as the mth node in the path. As $x$ is not the first node in the repeating edge block, we know that there exists an edge of the form $(y,x)$ in the repeating edge block for some node $y$.  By the definition of a repeating edge, $(y,x)$ or $(x,y)$ appears earlier as a new edge in the path.

\textbf{Case 1:} $(y,x)$ appears earlier as a new edge in the path.  If $(y,x)$ is a new edge in the path, then either $x$ is in the interior of a new edge block or $x$ is at the end of a new edge block and hence would also be the first node of a repeating edge block. 

\textbf{Case 2:} $(x,y)$ appears earlier as a new edge in the path.  Then $x$ can either be the first node in the path, $x$ can be in the new edge interior, or $x$ is at the beginning of a new edge block.  If $x$ is at the beginning of a new edge block, then $x$ is at the end of a repeating edge block. This implies that this earlier appearance of $x$ also contradicts the lemma.  But since we stipulated that the first contradiction must appear at the mth node in the path and not earlier, the lemma must be true. 
\end{proof}

At this juncture, we present notation to perform summations over nodes in a path (or entries in a list).  In particular, we can treat lists as sets; formally, we can represent a list containing natural numbers  as a function $f: \mathbb{N}\rightarrow\mathbb{N}$.  We can then represent this function as a set of ordered pairs.  More specifically, if $(4,7)$ is in the set corresponding to this list, then the $4th$ entry in the list is $7$.  In this way, we can carry over set notation to lists.  For example for a list $\mathbf{N}$, we can say that $(4,7)\in \mathbf{N}$.  Unfortunately, it is notationally burdensome and frequently uninsightful to refer to the position of the entry on a list and so we omit this information.

With this method, for a list $\mathbf{N}$ and an arbitrary function $g:\mathbb{N}\rightarrow\mathbb{R}$, we can define $\Pi_{j\in\mathbf{N}}g(j) = \Pi_{(i,j)\in\mathbf{N}}g(j)$.  In particular if $\mathbf{N} = [3,5,5]$.  Then $\Pi_{j\in\mathbf{N}}g(j) = g(3)\cdot g(5) \cdot g(5)$.

We now provide the following result, which will assist us in computing the probability that a path exists even if multiple edges repeat.
\begin{lem}\label{lem:probpath}
Define $X_{(i,j)}$ as an indicator random variable that equals $1$ if the edge $(i,j)$ exists and $0$ otherwise.  Consequently, $\Pi_{k=1}^{r}X_{(i_{k},i_{k+1})}$ is an indicator random variable that equals $1$ if there is a path $(i_{1},...,i_{r+1})$.  Let $\mathbf{N}$ be the list of all nodes in the new edge interior.  Let $\mathbf{R_{1}}$ be a list of pairs of the first and last nodes for each repeating edge block, where the first node has appeared before and let $\mathbf{R_{2}}$ be a list of pairs of the first and last nodes for each repeating edge block, where the first node has not appeared before.  If the first and last edges are new edges, then 
\begin{equation}\label{eq:prob}
Pr(\Pi_{k=1}^{r}X_{(i_{k},i_{k+1})} = 1) = \frac{d_{i_{1}}d_{i_{r+1}}}{S}\Pi_{i\in\mathbf{N}}\frac{d_{i}^{2}}{S}\Pi_{(j,k)\in \mathbf{R_{1}}}\frac{d_{j}d_{k}}{S}\Pi_{(l,m)\in \mathbf{R_{2}}}\frac{d_{l}d_{m}}{S}.
\end{equation}

Furthermore, if $q_{i}$ is the number of repeating edge blocks of length $i$, then the number of nodes in $\mathbf{N}$, \begin{equation}\label{eq:count} |\mathbf{N}| = r - 1 - \sum_{i=1}^{r-2}(i+1)q_{i}. 
\end{equation}
\end{lem}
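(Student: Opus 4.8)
The plan is to establish the counting identity \eqref{eq:count} first, since it is purely combinatorial and is also exactly what is needed to check the powers of $\tfrac{1}{S}$ in \eqref{eq:prob}. Write the path as $(i_1,\dots,i_{r+1})$, call the pair $(i_k,i_{k+1})$ ``edge $k$'' and the position of node $i_m$ ``slot $m$''; there are $r-1$ interior slots, $m\in[2..r]$, and by definition $\mathbf{N}$ lists those interior slots for which both edge $m-1$ and edge $m$ are new. First I would show that the interior slots that are \emph{not} in $\mathbf{N}$ are precisely the slots covered by repeating edge blocks. A repeating edge block occupying edges $a,\dots,b$ covers slots $a,a+1,\dots,b+1$, that is $(b-a+1)+1$ of them; since edge $1$ and edge $r$ are new, such a block contains neither, so $a\ge 2$, $b\le r-1$, all these slots are interior, and each of them has a repeating incident edge, hence is not in $\mathbf{N}$. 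Conversely, any interior slot with a repeating incident edge lies in exactly one such block's slot set, because two consecutive repeating edges belong to the same block by maximality, so distinct blocks are separated by at least one new edge and their slot sets are pairwise disjoint. Summing over blocks, and noting that a block of length $i$ uses $i$ edges, none of them the first or last edge, so $i\le r-2$, we obtain $|\mathbf{N}| = (r-1)-\sum_{i=1}^{r-2}(i+1)q_i$, which is \eqref{eq:count}.

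For \eqref{eq:prob}, the first step is to use that the indicators $X_{(i,j)}$ are independent across distinct edges, satisfy $X_{(i,j)}^2=X_{(i,j)}$, and obey $X_{(i,j)}=X_{(j,i)}$ on an undirected graph, so $\prod_{k=1}^r X_{(i_k,i_{k+1})}$ collapses to the product of $X_e$ over the \emph{distinct} edges $e$ of the path, which are indexed by the positions of the new edges. Taking expectations and using independence, $Pr\big(\prod_{k=1}^r X_{(i_k,i_{k+1})}=1\big)=\prod_{\text{new edges }k} p_{i_k i_{k+1}}=\prod_{\text{new edges }k}\tfrac{d_{i_k}d_{i_{k+1}}}{S}$. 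The rest is regrouping. Each new edge $k$ contributes one $\tfrac{1}{S}$ and the degrees of slots $k$ and $k+1$, so I would count, slot by slot, how many new edges are incident to each slot: slots $1$ and $r+1$ have exactly one (edge $1$, resp.\ edge $r$, which is new by hypothesis), contributing $d_{i_1}$ and $d_{i_{r+1}}$; an interior slot has two incident new edges exactly when it lies in $\mathbf{N}$, zero when it is interior to a repeating edge block, and exactly one precisely when it is the first or the last node of a repeating edge block --- and no slot is simultaneously the last node of one block and the first of another, again by maximality. Hence the numerator of $\prod_{\text{new edges }k}\tfrac{d_{i_k}d_{i_{k+1}}}{S}$ is $d_{i_1}d_{i_{r+1}}\prod_{i\in\mathbf{N}}d_i^2\prod_{(j,k)\in \mathbf{R_1}\cup\mathbf{R_2}}d_jd_k$, matching \eqref{eq:prob}; the split of the repeating blocks into $\mathbf{R_1}$ and $\mathbf{R_2}$ is irrelevant here, as both lists are weighted identically.

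It remains to match the powers of $\tfrac{1}{S}$. The left side has one per new edge, i.e.\ $r-\sum_i i\,q_i$ of them; the right side of \eqref{eq:prob} has $1+|\mathbf{N}|+|\mathbf{R_1}|+|\mathbf{R_2}|$, and since $|\mathbf{R_1}|+|\mathbf{R_2}|$ is just the number of repeating edge blocks $\sum_i q_i$, substituting \eqref{eq:count} yields $1+\big(r-1-\sum_i(i+1)q_i\big)+\sum_i q_i=r-\sum_i i\,q_i$, as required; this closes both parts of the lemma.

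I expect the main difficulty to lie in the slot-by-slot bookkeeping of the second paragraph: it is essential to track \emph{slots} rather than node labels so that a vertex occupying several positions of the walk enters $\mathbf{N}$, $\mathbf{R_1}$, $\mathbf{R_2}$ with the correct multiplicity; one must carefully identify which slots of a repeating edge block are first, last, or interior; and the hypothesis that the first and last edges are new is genuinely needed, since without it the endpoint degrees $d_{i_1}$ and $d_{i_{r+1}}$ would not appear to the first power. The independence/idempotency reduction and the final counting step are routine.
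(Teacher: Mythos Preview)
Your proposal is correct and follows essentially the same approach as the paper: both arguments establish \eqref{eq:count} by counting the complement of $\mathbf{N}$ among the interior positions (each repeating edge block of length $i$ occupies $i+1$ slots), and both establish \eqref{eq:prob} by reducing the product of indicators to the distinct (new) edges via idempotency and independence, then regrouping the degree factors by node position. The only difference is presentational---the paper builds the formula through a sequence of illustrative examples (no repeating edges, then adding $\mathbf{R_1}$, then $\mathbf{R_2}$), whereas you give a systematic slot-by-slot accounting and an explicit check on the powers of $\tfrac{1}{S}$; your version is somewhat more careful but the mathematical content is the same.
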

\begin{proof}
To derive (\ref{eq:prob}), we first consider some examples from Figure \ref{fig:repeatedge}.
\begin{figure}[h]
\centering
\includegraphics[scale=.1]{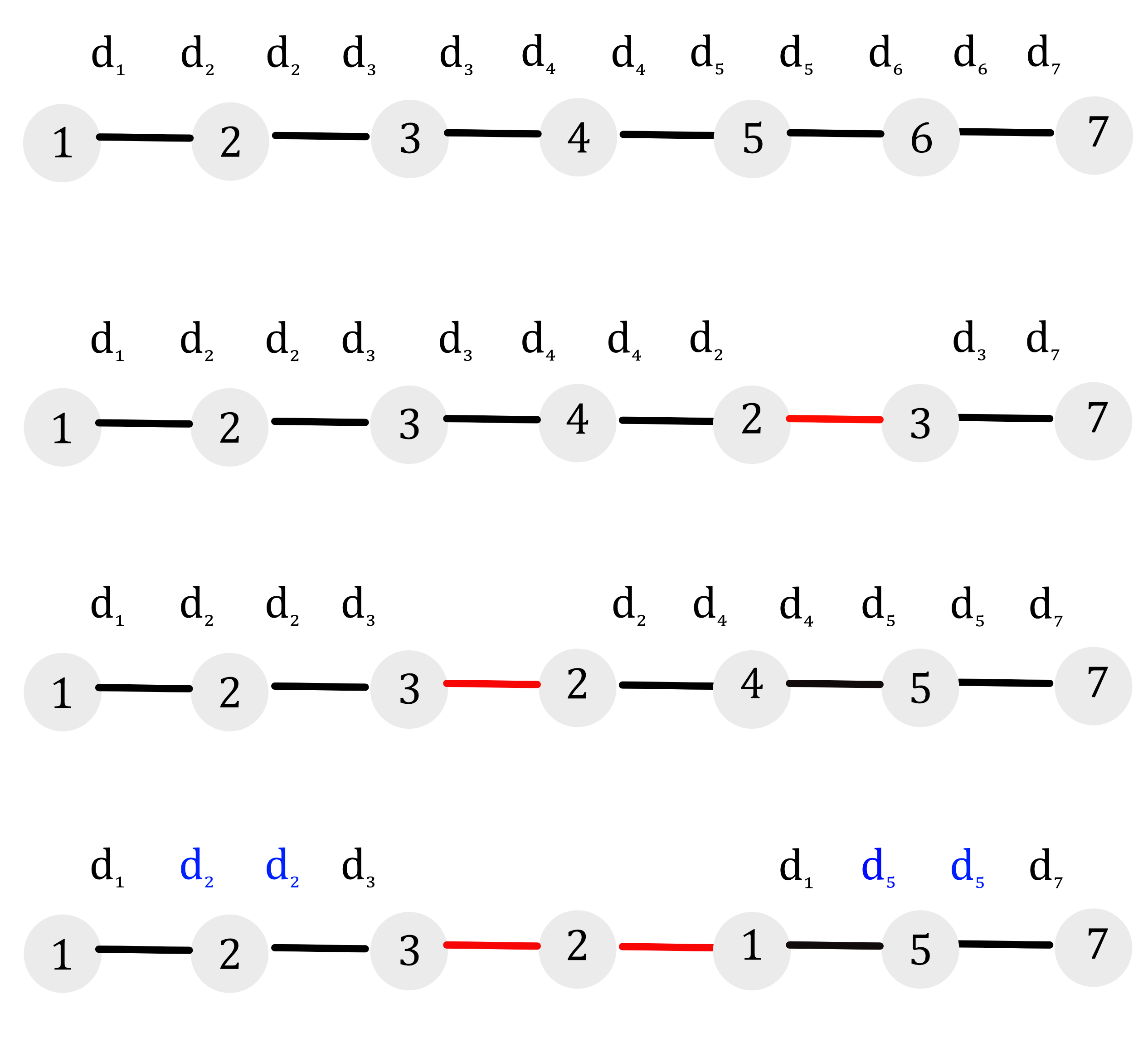}
\caption{An illustration for the types of repeating edge blocks.  For each of the four paths, edges shaded in red represent repeating edges.  \textbf{For the first path}, there are no repeating edges.  Above each non-repeating edge we write the product of the respective expected degrees of two nodes, which is proportional to the probability the edge exists. \textbf{In the second path}, we have one repeating edge, where the first node in that repeating edge block appears elsewhere in the path.  In the notation of Lemma \ref{lem:probpath}, $\mathbf{R}_1 = [(2,3)]$.  \textbf{For the third path}, the first node in the repeating edge block does not appear elsewhere. $\mathbf{R}_2 = [(3,2)]$, $\mathbf{R}_1 = \emptyset$.  \textbf{For the fourth path}, we illustrate how a larger repeating edge block impacts the probability that a path exists.  $\mathbf{R}_2 = [(3,1)]$, $\mathbf{N} = [2,5]$.} \label{fig:repeatedge}
\end{figure}
For the first path in Figure \ref{fig:repeatedge}, there are only new edges and the new edge interior $\mathbf{N} = [2,3,4,5,6]$.   It follows that the probability that the first path in Figure \ref{fig:repeatedge} exists is precisely $$\frac{d_{1}d_{7}}{S}\Pi_{i=2}^{6}\frac{d_{i}^{2}}{S} = \frac{d_{1}d_{7}}{S}\Pi_{i\in \mathbf{N}}\frac{d_{i}^{2}}{S}.$$

More generally for an arbitrary path from node $s$ to node $t$ with no repeating edges, it follows that the probability the path exists is,

$$\frac{d_{s}d_{t}}{S}\Pi_{i\in \mathbf{N}}\frac{d_{i}^{2}}{S}.$$

Of course as in the second path in Figure \ref{fig:repeatedge}, we may have repeating edges. 
By noting that the last node in a repeating edge block must appear earlier (by Lemma \ref{lem:pathrestrict}), there are two types of repeating edge blocks; either the first node has appeared earlier in the path or the first node has not appeared earlier in the path.

Define a list $\mathbf{R_1}$ consisting of the first and last nodes for each repeating edge block, where the first node has appeared earlier in the path.  Consequently, the probability such a path exists is,

$$\frac{d_{s}d_{t}}{S}\Pi_{i\in \mathbf{N}}\frac{d_{i}^{2}}{S}\Pi_{(j,k)\in\mathbf{R_1}}\frac{d_{j}d_{k}}{S}.$$

Finally, since the first node in a repeating edge block may have not appeared before, define a list $\mathbf{R_{2}}$ consisting of the the first and last nodes for each repeating edge block, where the first node has not been seen before.  Then the probability such a path exists is

\begin{equation}\label{eq:comp}\frac{d_{s}d_{t}}{S}\Pi_{i\in \mathbf{N}}\frac{d_{i}^{2}}{S}\Pi_{ (j,k)\in\mathbf{ R_1}}\frac{d_{j}d_{k}}{S}\Pi_{(l,m)\in \mathbf{R_2}}\frac{d_{l}d_{m}}{S}.
\end{equation}

\noindent This completes the proof of (\ref{eq:prob}).   Let $q_{i}$ be the number of repeating edge blocks of length $i$.  We now verify equation (\ref{eq:count}), $|\mathbf{N}| = r - 1 - \sum_{i=1}^{r-2}(i+1)q_{i}.$  
\newline\newline \noindent First consider the number of times a node is not in the new edge interior, denoted by $|\mathbf{N}^{c}|$.  Alternatively, $|\mathbf{N}^{c}|$ counts the number of times a node appears in a repeating edge block in addition to the first and last nodes of the path.   Then $|\mathbf{N}^{c}|$ is precisely $2 + \sum_{i}(i+1)q_{i}$, as there are $i+1$  nodes in each of the $q_{i}$ repeating edge block of length $i$.  Since there are $r + 1$ nodes in a path of length $r$, $r + 1 - 2 - \sum_{i\geq 1}(i+1)q_{i}$ is precisely the right hand side of equation (\ref{eq:count}).  (We derive the upperlimit in the summation from the assumption that the first and last edges are new edges and that the path has length $r$, which implies that the length of a repeating edge block could be at most $r-2$.) 
\end{proof}

 At this juncture, we provide a formal definition for a nonbacktrackingpath; we will then show that such paths are both analytically tractable and easy to compute using an almost shortest path algorithm.  

\begin{defn}
A nonbacktracking path $(x_{1},...,x_{r+1})$ is a path where for all integers $i\in[1..r-1]$, $x_{i}\neq x_{i+2}$.  We denote the number of nonbacktracking paths of length $r$ from node $s$ to $t$ as $NBP_{r}(s,t)$.
\end{defn}

In the following lemma, we demonstrate how the formula from Lemma \ref{lem:probpath} simplifies for computing the probability that a nonbacktracking path exists.  

\begin{lem}
Given a nonbacktracking path, then the first node for any repeating edge block in the path either appears in the new edge interior or is the first node in the path.  Alternatively in the language of Lemma \ref{lem:probpath} for a nonbacktracking path, $\mathbf{R}_2 =\emptyset$.
\end{lem}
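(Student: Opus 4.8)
The plan is to argue directly from the nonbacktracking condition, without appealing to Lemma \ref{lem:pathrestrict} (which concerns non-first nodes of repeating edge blocks). Let $x$ be the first node of some repeating edge block, and suppose $x$ occupies position $i$ in the path, so that $x = x_i$ and the first edge of this block is $(x_i,x_{i+1})$, necessarily a repeating edge. Then the undirected edge $\{x_i,x_{i+1}\}$ occurs at some earlier position, say $\{x_i,x_{i+1}\} = \{x_j,x_{j+1}\}$ with $j<i$. First I would check that this forces $x$ itself to occur at a position strictly less than $i$: if $x_j=x_i$, we are done immediately; if instead $x_j=x_{i+1}$ and $x_{j+1}=x_i$, then $x$ occurs at position $j+1\le i$, and here the boundary case $j+1=i$ would give $x_{i-1}=x_{i+1}$, contradicting nonbacktracking at index $i-1$, so in fact $j+1<i$. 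Either way $x$ has an occurrence at some position $i_0<i$, and we take $i_0$ to be the earliest occurrence of $x$ in the path.

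Next I would show that this earliest occurrence puts $x$ in the new edge interior, unless it is the first node of the path. If $i_0=1$, then $x$ is literally the first node of the path and we are done. Otherwise $x_{i_0}$ has an incoming edge $(x_{i_0-1},x_{i_0})$ and, since $i_0<i\le r$ gives $i_0+1\le r+1$, also an outgoing edge $(x_{i_0},x_{i_0+1})$, so $x_{i_0}$ is not an endpoint of the path. I would then verify that both of these edges are new edges: if either $\{x_{i_0-1},x_{i_0}\}$ or $\{x_{i_0},x_{i_0+1}\}$ coincided with an edge at an earlier position, matching the endpoints would exhibit an occurrence of $x$ earlier than $i_0$, contradicting minimality of $i_0$, except in the boundary sub-cases $\{x_{i_0-1},x_{i_0}\}=\{x_{i_0-2},x_{i_0-1}\}$ or $\{x_{i_0},x_{i_0+1}\}=\{x_{i_0-1},x_{i_0}\}$ (and when $i_0=2$ there simply is no earlier position for the incoming edge); each boundary sub-case forces $x_k=x_{k+2}$ for a suitable $k$, contradicting the nonbacktracking hypothesis. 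Hence both edges incident to $x_{i_0}$ are new, so $x_{i_0}$ lies in the new edge interior.

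Combining the two steps, the first node of any repeating edge block is either the first node of the path or appears in the new edge interior; in either case it occurs somewhere strictly before the block begins, so no repeating edge block can have a first node that has not appeared before, i.e.\ $\mathbf{R}_2=\emptyset$. I expect the main obstacle to be the bookkeeping of the boundary sub-cases in both steps — the situations where the repeated edge is adjacent to the edge under examination — since these are exactly the places where one must invoke $x_k\neq x_{k+2}$, and keeping the indices straight (in particular confirming that $x_{i_0}$ is not an endpoint of the path) is the only delicate part of the argument.
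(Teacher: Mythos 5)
Your proof is correct, but it takes a genuinely different route from the paper's. The paper argues by choosing the \emph{first position} $i$ at which the lemma fails and then leaning on Lemma \ref{lem:pathrestrict}: if the value $x_i$ had an earlier occurrence, that occurrence would lie in a repeating edge block, and Lemma \ref{lem:pathrestrict} would produce a still-earlier occurrence of the same value at the \emph{start} of a repeating edge block, contradicting minimality of $i$; hence $x_i$ has no earlier occurrence, which forces the earlier copy of the block's first edge to be the immediately preceding edge, giving $x_{i-1}=x_{i+1}$ and violating nonbacktracking. You instead work with the earliest occurrence $i_0$ of the node value itself and show directly, by matching endpoints of any putative earlier copy of the two edges incident to position $i_0$, that both of those edges are new, invoking $x_k\neq x_{k+2}$ only in the adjacent-edge boundary cases. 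Your argument is self-contained (it never uses Lemma \ref{lem:pathrestrict}), and your Step 1 alone already delivers $\mathbf{R}_2=\emptyset$ as stated in Lemma \ref{lem:probpath}, with Step 2 supplying the stronger ``new edge interior'' conclusion; the cost is the explicit index bookkeeping you flag, which the paper's proof outsources to Lemma \ref{lem:pathrestrict}. One minor simplification: for the incoming edge $\{x_{i_0-1},x_{i_0}\}$ an earlier copy must sit at edge position $k\le i_0-2$, so its endpoints occur at node positions at most $i_0-1<i_0$ and minimality of $i_0$ already rules it out; nonbacktracking is genuinely needed only for the outgoing edge's boundary sub-case. This does not affect correctness.
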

\begin{proof}
Suppose there exists a nonbacktracking path $(x_{1},...,x_{r+1})$ that violates the lemma and denote the first node in the path, $x_{i}$, that is the first node in a repeating edge block that does not appear in the new edge interior and is not the first node in the path.  It follows that $(x_{i-1},x_{i})$ is a new edge and that $(x_{i},x_{i+1})$ has appeared elsewhere.  Once we prove that $x_{i}$ cannot appear earlier in the path, it will follow that $x_{i+1} = x_{i-1}$ and that the path will not be a nonbacktracking path, a contradiction.
\newline
Suppose that $x_{i}$ has appeared earlier in the path.  $x_{i}$ cannot be the first node in the path or part of the new edge interior. Consequently, there exists a $j$ such that $x_{j} = x_{i}$, where $j< i $; furthermore, $x_{j}$ must appear in an earlier repeating edge block.  From Lemma \ref{lem:pathrestrict}, this would imply that there exists an $x_{k} = x_{j} = x_{i}$, where $x_{k}$ appears at the {\em beginning} of an earlier repeating edge block. But then this would imply that the $ith$ node in the path, $x_{i}$ is not the {\em first} node in the path that violates the property stated in the lemma.  Consequently, $x_{i}$ cannot appear earlier in the path.  
\end{proof}

Now that we have demonstrated that for a nonbacktracking path, the first node in a repeating edge block appears either in the interior of a new edge block or is the first node in a path, we can invoke Lemma \ref{lem:probpath} to bound the expected number of nonbacktracking paths between two nodes.

\begin{thm}\label{thm:Epaths}
For the Chung-Lu random graph model, define $S_2 = \sum_{i=1}^{n} d_{i}^{2}$ and $d_{max} = \max_{i} d_{i}$.  Let $S_2 > S$ and consider the expected number of nonbacktracking paths of length $r$ from node $s$ to $t$, $E[NBP_{r}(s,t)]$, where $s \neq t$.  If $2r < \frac{S_2}{S}$ , then

$$E[NBP_{r}(s,t)] \leq \frac{p_{st}(\frac{S_2}{S})^{r-1}}{1-\frac{S}{S_2}}exp(\frac{(2r\frac{S}{S_2})^{2}p_{max}}{1-\frac{2rS}{S_2}}).$$
\end{thm}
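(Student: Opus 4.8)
The plan is to write $E[NBP_r(s,t)]$ as a weighted sum over nonbacktracking $s$–$t$ paths of length $r$, apply the probability formula of Lemma \ref{lem:probpath} together with the fact just established that $\mathbf R_2=\emptyset$ for such paths, and then bound the resulting sum by cleanly separating the "freely ranging" interior nodes of a path from its repeating structure. Concretely, fix a nonbacktracking path $P=(s=i_1,\dots,i_{r+1}=t)$ with $s\neq t$. Since $\mathbf R_2(P)=\emptyset$, Lemma \ref{lem:probpath} gives (when the first and last edges are new) $\Pr(P\text{ exists})=p_{st}\,\prod_{i\in\mathbf N(P)}\frac{d_i^2}{S}\,\prod_{(j,k)\in\mathbf R_1(P)}\frac{d_jd_k}{S}$; the case where the final edge is itself a repeating edge is a genuine subtlety that must be dispatched separately (either by a variant of Lemma \ref{lem:probpath} allowing the boundary edges to repeat, or by a cruder estimate for that subclass of paths — this is a natural source of the extra geometric factor $\frac{1}{1-S/S_2}$). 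The two observations that drive everything are: (a) each node $j$ or $k$ occurring in $\mathbf R_1(P)$ has its degree already pinned down by the labels on $\mathbf N(P)$ (or equals $s$ or $t$), so $\prod_{(j,k)\in\mathbf R_1}\frac{d_jd_k}{S}\le p_{max}^{\,q}$, where $q$ is the number of repeating edge blocks of $P$; and (b) by \eqref{eq:count}, $|\mathbf N(P)|=r-1-R$ with $R=\sum_i(i+1)q_i$.

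Next I would reorganize the sum according to the discrete \emph{shape} of $P$: which of the $r$ edge-slots carry new versus repeating edges, the block-length profile $(q_i)$, and, for each repeating edge, which earlier edge it retraces. For a fixed shape, summing $\prod_{i\in\mathbf N}\frac{d_i^2}{S}$ over all labelings of the free interior nodes gives exactly $(S_2/S)^{\,r-1-R}$, so
$$E[NBP_r(s,t)]\ \le\ p_{st}\Big(\frac{S_2}{S}\Big)^{r-1}\sum_{\text{shapes}}p_{max}^{\,q}\Big(\frac{S}{S_2}\Big)^{R}.$$
To estimate the shape-sum I would control the contribution of a single repeating edge block of length $i$: there are at most $r$ choices for where it starts; at most $r^{\,i}$ choices for the walk of previously seen edges it retraces (a walk of length $i$ in the at-most-$(r-1)$-edge "seen" subgraph starting from a determined node, with the nonbacktracking constraint only reducing the count); and such a block carries the factor $p_{max}(S/S_2)^{i+1}$. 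Because the hypothesis $2r<S_2/S$ forces $rS/S_2<\tfrac12$, the geometric sum over block lengths $i\ge 1$ converges and yields a per-block weight $A$ bounded by an absolute multiple of $\frac{r^2p_{max}(S/S_2)^2}{1-rS/S_2}$, which is $\le \frac{(2rS/S_2)^2p_{max}}{1-2rS/S_2}$.

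Finally, letting the $q$ block-start positions range independently over $[1..r]$ overcounts each genuine shape by at most $q!$, so the shapes with $q$ blocks contribute at most $A^q/q!$, and $\sum_{q\ge 0}A^q/q!=\exp(A)$, which is the claimed factor. The residual $\frac{1}{1-S/S_2}$ comes from the bookkeeping of how the $r-1-R$ free interior nodes split among the $q+1$ new-edge blocks, together with the last-edge-repeat case flagged above. I expect the main obstacle to be exactly this combinatorial bookkeeping in the last step: getting the number of retrace choices per block and the interaction between the placement of new-edge blocks and repeating-edge blocks sharp enough that the double sum collapses to precisely the stated $\exp\!\big(\frac{(2rS/S_2)^2p_{max}}{1-2rS/S_2}\big)$ and geometric factor while only invoking $2r<S_2/S$ (so that every geometric series in play converges). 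The remaining ingredients — the probability formula, the reduction $\mathbf R_2=\emptyset$, observations (a) and (b), and the free-label summation producing $(S_2/S)^{\,r-1-R}$ — are straightforward consequences of the lemmas already in hand.
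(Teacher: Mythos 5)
Your plan follows essentially the same route as the paper's proof: apply Lemma \ref{lem:probpath} with $\mathbf{R}_2=\emptyset$, sum the free interior labels to get $(S_2/S)^{r-1-R}$, bound the $\mathbf{R}_1$ contribution by $p_{max}$ per block, count block placements and retraces (the paper uses $r$ placements and $(2r)^{i}$ retrace choices per block of length $i$), and Poissonize via $\sum_q A^q/q!=e^A$; the last-edge subtlety you flag is resolved in the paper by the recursion $NBP_r(s,t)\le NBP_r^L(s,t)+NBP_{r-2}(s,t)$, whose iteration is the sole source of the geometric factor $\frac{1}{1-S/S_2}$. This matches your sketch in all essentials.
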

\begin{proof}
The challenge in bounding the expected number of nonbacktracking paths of length $r$ comes from the issue that a path may visit the same edge multiple times.  As a result, we define the indicator random variable $X_{(u,v)}$ to be $1$ if the edge $(u,v)$ exists and $0$ otherwise.  For simplicity let $i_0 = s$ and $i_r = t$. Define a set $\mathbf{B}\subset \mathbb{N}^{r-1}$, where $\mathbf{i} \in \mathbf{B}$ if for each $j\in [1..r-1]$, $i_{j} \in [1..r]$ and for all $j \in [2..r]$, $i_{j} \neq i_{j-2}$.  (Alternatively, if $\mathbf{i} \in \mathbf{B}$, then $\mathbf{i}$ corresponds to a nonbacktracking path from $s$ to $t$ that could exist in the graph.)    We then have that

\begin{equation} \label{eq:indicateE}
E[NBP_{r}(s,t)] =\sum_{\mathbf{i}\in\mathbf{B}}Pr(\Pi_{j=0}^{r-1} X_{(i_{j},i_{j+1})} = 1) = 
\end{equation}
\begin{equation} \sum_{\mathbf{i}\in\mathbf{B}}Pr(X_{(i_{0},i_{1})}=1)\Pi_{j=1}^{r-1}\Pr(X_{(i_{j},i_{j+1})} = 1|\Pi_{k=0}^{j-1}X_{(i_{k},i_{k+1})}= 1),
\end{equation}

\noindent where $\Pi_{j=0}^{r-1} X_{(i_{j},i_{j+1})} = 1$ implies that there is a path (of length $r$) from $i_{0} = s$ to $i_{r} = t$.  Note that $Pr(X_{(i_{j},i_{j+1})}= 1|\Pi_{k=0}^{j-1}X_{(i_{k},i_{k+1})}= 1 ) = 1$ if $(i_{j},i_{j+1}) = (i_{k},i_{k+1})$ or $(i_{j},i_{j+1}) =(i_{k+1},i_{k})$ for some $k \in [0..j-1]$ and  $Pr(X_{(i_{j},i_{j+1})}= 1|\Pi_{k=0}^{j-1}X_{(i_{k},i_{k+1})}= 1 ) = p_{i_{j}i_{j+1}}$ otherwise by independence

Now to prove the upperbound, we will modify the order in which we condition on edges in the path.  More specifically, from (\ref{eq:indicateE}) we have that 

\begin{multline}\label{eq:indicateE2}
 E[NBP_{r}(s,t)] = \sum_{\mathbf{i}\in\mathbf{B}}Pr(X_{(i_{0},i_{1})}=1)Pr(X_{(i_{r-1},i_{r})}=1|X_{(i_{0},i_{1})}=1)\LargerCdot\\  \Pi_{j=1}^{r-2}\Pr(X_{(i_{j},i_{j+1})} = 1|X_{(i_{r-1},i_{r})}\Pi_{k=0}^{j-1}X_{(i_{k},i_{k+1})}= 1),
\end{multline}

where we process the last edge immediately after the first edge and then resume the normal order for conditioning on the remaining edges in the path.  In particular it will be helpful to initially assume that $Pr(X_{(i_{r-1},i_{r})}=1|X_{(i_{0},i_{1})}=1) = p_{i_{r-1}i_{r}}$, that is $(i_{r-1},i_{r})$ is not an edge we have visited before.  Denote $NBP^{L}_{r}(s,t)$ as the number of nonbacktracking paths of length $r$ from node $s = i_{0}$ to node $t = i_{r}$, where the last edge does not equal the first edge.  We will argue that for $r \geq 2$,

\begin{equation}\label{eq:ineq}
NBP_{r}(s,t) \leq NBP_{r}^{L}(s,t) + NBP_{r-2}(s,t),
\end{equation}
where applying the (\ref{eq:ineq}) to itself iteratively yields the inequality,

\begin{equation}\label{eq:ineq2}
NBP_{r}(s,t) \leq NBP_{1}(s,t)+\sum_{m=2}^{r}NBP_{m}^{L}(s,t) ,
\end{equation}

Consequently, to derive a formula for the expected number of nonbacktracking paths of length $r$ from node $s$ to $t$, it suffices to construct a formula for the expected number of nonbacktracking paths of length $r$ from node $s$ to node $t$, where the first edge is not the same as the last edge.  (Note that computing the expected number of paths of length $1$ is precisely the probability that nodes $s$ and $t$ are neighbors).
\newline

To show (\ref{eq:ineq}), consider all (nonbacktracking) paths of length $r$ where the last edge is identical to the first edge.  It then follows that the path must be of the form $s,i_{1},....,i_{r-1},t$ as the first and last nodes in the path must be $s$ and $t$ respectively. Furthermore, since the first edge and last edge are identical and by assumption $s \neq t$, it follows that $i_{r-1} = s$ and $i_{1} = t$.  Hence all paths where the last and first edges are identical are of the form, $s,t,....,s,t$.  Assuming that an edge from node $s$ to $t$ exists, the number of such paths is precisely $NBP_{r-2}(t,s)$.  But since this is an undirected graph we have that $NBP_{r-2}(t,s) = NBP_{r-2}(s,t)$, which proves (\ref{eq:ineq}).

Since the first and last edges cannot be repeating edges, we can now invoke Lemma \ref{lem:probpath} to compute the probability that a given path exists.
Define $k_{0}$ to be the number of new edges.  For $i > 0$, let $k_{i}$ be the number of repeating edge blocks of length $i$ (where the first node has already been seen before).  So to compute $E[NBP_{r}^{L}(s,t)]$, we will fix (integer) values for $k_{i}$, consider all possible arrangements for each of the $k_{i}$ repeating edge blocks and then by accounting for all possible choices of nodes in the lists for the new edge interior $\mathbf{N}$ and in the list of nodes in a repeating edge block $\mathbf{R}$, an application of Lemmas \ref{lem:probpath} and \ref{lem:QSPcomp} yields that,

\begin{multline} \label{eq:E1}
E[NBP_{r}^{L}(s,t)] \leq \sum_{\substack{k_{0}+\sum_{i=1}^{r-2}i(k_{i})=r \\ \forall i\in[0..r-2], k_{i} \in [0..r]}}\binom{\sum_{i=0}^{r-2}k_{i}}{k_{0},k_{1},...,k_{r-2}} [\sum_{\mathbf{N},\mathbf{R}} \frac{d_{s}d_{t}}{S}\Pi_{i\in\mathbf{N}}\frac{d_{i}^{2}}{S}\Pi_{(j,k)\in\mathbf{R_1}}\frac{d_{j}d_{k}}{S}],
\end{multline}
where $\mathbf{R}_{1}$ consists of the nodes at the beginning and end of a repeating edge block is a function of $\mathbf{R}$ and the innersum represents all possible choices of nodes for constructing $\mathbf{N}$ and $\mathbf{R}$  that yield paths with the prescribed number of repeating edge blocks of various lengths.  We can then construct an upperbound to (\ref{eq:E1}) by identifying the nodes in $\mathbf{R_1}$ that must equal other nodes in the summation and bound the contrubition of that node's expected degree by $d_{max}$.  We claim that this yields the following inequality.
\begin{multline}\label{eq:E2}
E[NBP_{r}^{L}(s,t)] \leq \\ \sum_{k_{0}+\sum_{i}i(k_{i})=r}\binom{\sum_{i=0}^{r-2}k_{i}}{k_{0},k_{1},...,k_{r-2}}\sum_{\substack{j_1,...,j_{|\mathbf{N}|}}} \frac{d_{s}d_{t}}{S}\Pi_{l=1}^{|\mathbf{N}|}\frac{d_{j_{l}}^{2}}{S}\Pi_{m=1}^{r-2}(p_{max}(2r)^{m})^{k_{m}},
\end{multline}

where there are $r-2-\sum_{i}(i+1)(k_{i})$ nodes in $\mathbf{N}$ from Lemma \ref{lem:probpath}.  Note that the contribution from the term  $\Pi_{(j,k)\in\mathbf{R_1}}\frac{d_{j}d_{k}}{S}$ is replaced by $p_{max}$ as both nodes in $\mathbf{R_1}$ appear elsewhere by definition.  We can then account for the summation over all possible choices for the list $\mathbf{R}$, of nodes in a repeating edge block, by noting that for an arbitrary repeating edge block of length $l$, there are at most $(2r)^{l}$ choices for filling in the repeating edge block.  

Summing over all possible choices of nodes and using the fact that $\binom{\sum_{i=0}^{r-2}k_{i}}{k_{0},k_{1},...,k_{r-2}}\leq \frac{r^{\sum_{i\geq 1}k_{i}}}{\Pi_{i\geq 1}k_{i}!}$, yields the following upperbound for (\ref{eq:E2}).

\begin{multline} \label{eq:E1upperL}
E[NBP_{r}^{L}(s,t)]\leq \sum_{k_{0}+\sum_{i}i(k_{i})=r}r^{\sum_{i=1}k_{i}} \frac{d_{s}d_{t}}{S}\frac{S_2}{S}^{r-1-\sum_{i}(i+1)(k_{i})}\Pi_{l=1}^{r-2}\frac{(p_{max}(2r)^{l})^{k_{l}}}{k_{l}!} = \\
\sum_{k_{0}+\sum_{i}i(k_{i})=r} \frac{d_{s}d_{t}}{S}\frac{S_2}{S}^{r-1}\Pi_{l=1}^{r-2}\frac{(p_{max}(2r)^{l+1}\frac{S}{S_2}^{l+1})^{k_{l}}}{k_{l}!}
\end{multline}
We can then bound above (\ref{eq:E1upperL}) by removing the constraint under the summation by letting $k_{0} = r -\sum_{i=1}^{r-2}ik_{i}$ and allowing the other $k_{i}$ take on any non-negative integer value.
\begin{multline}\label{eq:E1upper2L}
E[NBP_{r}^{L}(s,t)]\leq \sum_{k_{1}=0,...,k_{r-2}=0}^{\infty} \frac{d_{s}d_{t}}{S}\frac{S_2}{S}^{r-1}\Pi_{l=1}^{r-2}\frac{(p_{max}(2r)^{l+1}\frac{S}{S_2}^{l+1})^{k_{l}}}{k_{l}!}= \\
\frac{d_{s}d_{t}}{S}\frac{S_2}{S}^{r-1}\Pi_{l=1}^{r-2}exp(p_{max}(2r)^{l+1}(\frac{S}{S_2})^{l+1})\leq \\
\frac{d_s d_t}{S}\frac{S_2}{S}^{r-1}exp(\frac{p_{max}(2r)^{2}\frac{S}{S_2}^{2}}{1-2r\frac{S}{S_2}}).
\end{multline}

Finally applying (\ref{eq:E1upper2L}) to (\ref{eq:ineq2}) yields the result. 
\end{proof}

From Lemma \ref{lem:exp} and Theorem \ref{thm:Epaths} we know that the expected number of simple or non-simple paths grows exponentially in terms of path length.   In particular, t for a flexible range of parameters in Chung-Lu random graph model, the diameter is no greater than $O(\log n)$,  \cite{chung2002average}. And since the number of paths grows exponentially (in terms of length), that for practical applicaiton, the length of the almost shortest paths will also be no greater than $O(\log n)$.
Consequently, we are interested in the ratio of the number of simple paths and non-simple paths, where the length $r = O(\log n)$.  To attain such results, we will need bounds on the variance for the number of simple and nonbacktracking paths; hence we have the following theorem.

\begin{thm}\label{thm:varSP}
Consider a collection of sources $\mathbf{S}$ and targets $\mathbf{T}$, where $\mathbf{S}\cap\mathbf{T} = \emptyset$ and denote $SP_{r}(\mathbf{S},\mathbf{T}) = \sum_{s\in\mathbf{S}}\sum_{t\in\mathbf{T}}SP_{r}(s,t)$.  (Define $NBP_{r}(\mathbf{S},\mathbf{T})$ analogously.)  Suppose that $r^{2}p_{max}\frac{S}{S_2} = o(1),$ $r\frac{S}{S_2} = o(1)$ and $\frac{1}{E(SP_{r}(\mathbf{S},\mathbf{T}))} = o(1)$.  Then
\begin{multline}
var(SP_{r}(\mathbf{S},\mathbf{T})) \leq E(SP_{r}(\mathbf{S},\mathbf{T}))^{2}[o(1) + (1+o(1))\frac{S}{S_2}(\frac{d_{max}}{\sum_{s\in\mathbf{S}}d_{s}}+\frac{d_{max}}{\sum_{t\in\mathbf{T}}d_{t}})].
\end{multline}
\end{thm}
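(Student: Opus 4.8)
The plan is to expand the variance as a sum over pairs of potential simple paths and organize the computation according to how much the two paths overlap. Writing $SP_r(\mathbf{S},\mathbf{T}) = \sum_{\mathbf{a},\mathbf{b}} \mathbf{1}[\mathbf{a}]\mathbf{1}[\mathbf{b}]$, where $\mathbf{a},\mathbf{b}$ range over the sets $B_{st}$ of potential simple paths (with $s$ ranging in $\mathbf{S}$, $t$ in $\mathbf{T}$), we get $\mathrm{var}(SP_r) = \sum_{\mathbf{a},\mathbf{b}} \big(\Pr[\mathbf{a}\text{ and }\mathbf{b}\text{ exist}] - \Pr[\mathbf{a}]\Pr[\mathbf{b}]\big)$. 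Since edges are independent, the summand vanishes unless $\mathbf{a}$ and $\mathbf{b}$ share at least one edge. The first step is therefore to stratify the pair-sum by the set of shared edges: if $\mathbf{a}$ and $\mathbf{b}$ share a set of edges $F$ (which in a pair of simple paths forms a union of vertex-disjoint subpaths), then $\Pr[\mathbf{a}\text{ and }\mathbf{b}] = \Pr[\mathbf{a}]\Pr[\mathbf{b}] / \Pr[F\text{ exists}]$, and the contribution to the variance is $\Pr[\mathbf{a}]\Pr[\mathbf{b}](1/\Pr[F] - 1)$.

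Next I would bound the contribution of each overlap pattern. Fix the number $j \ge 1$ of shared edges and the number of vertex-disjoint shared subpaths. The dominant term should come from a single shared edge. Using the Chung--Lu probabilities $p_{ij} = d_id_j/S$ and the building-block estimates already developed (the product $\frac{d_sd_t}{S}\prod_{i\in\mathbf N}\frac{d_i^2}{S}$ from Lemma \ref{lem:probpath}, together with the geometric-series manipulations and the bound $\sum d_{i}^2 = S_2$ used in Lemma \ref{lem:exp} and Theorem \ref{thm:Epaths}), I would show that summing $\Pr[\mathbf a]\Pr[\mathbf b]$ over all pairs sharing exactly one specified edge, divided by that edge's probability, produces a factor of order $\frac{S}{S_2}$ times $E(SP_r(\mathbf S,\mathbf T))^2$, with the $\frac{d_{max}}{\sum_{s\in\mathbf S} d_s}$ and $\frac{d_{max}}{\sum_{t\in\mathbf T} d_t}$ terms arising from the cases where the shared subpath touches a source or a target node (so one of the free endpoint sums is replaced by a single worst-case degree $d_{max}$). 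Overlaps with $j\ge 2$ shared edges, or with the shared piece in the interior, should be lower order under the hypotheses $r^2 p_{max}\frac{S}{S_2} = o(1)$ and $r\frac{S}{S_2} = o(1)$: each extra shared edge costs a factor $O(r^2 \frac{S}{S_2})$ or $O(r\, p_{max}\frac{S}{S_2})$ in the bookkeeping (the $r^2$ from the choices of where the overlap sits in the two paths), which is $o(1)$. These contributions, together with the "diagonal" term where $\mathbf a = \mathbf b$ (contributing at most $E(SP_r(\mathbf S,\mathbf T))$, hence $o(1)\cdot E(SP_r(\mathbf S,\mathbf T))^2$ by the third hypothesis), get absorbed into the $o(1)\cdot E(SP_r(\mathbf S,\mathbf T))^2$ term.

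The final step is to assemble these estimates: the leading overlap term gives $(1+o(1))\frac{S}{S_2}\big(\frac{d_{max}}{\sum_{s\in\mathbf S} d_s} + \frac{d_{max}}{\sum_{t\in\mathbf T} d_t}\big)E(SP_r(\mathbf S,\mathbf T))^2$, everything else is $o(1)\cdot E(SP_r(\mathbf S,\mathbf T))^2$, and adding them yields the claimed inequality.

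The main obstacle I anticipate is the careful combinatorial accounting in the second step: classifying all the ways two simple paths of length $r$ can overlap (a shared edge can sit at various positions relative to each path, can include an endpoint or not, and the two paths can traverse the shared piece in the same or opposite directions), and verifying in each case both that the degree products telescope correctly against the $S_2/S$ powers and that the combinatorial multiplicities (of order $r^{O(j)}$) are killed by the smallness hypotheses. Getting the constant exactly $(1+o(1))$ in front of the leading term — rather than some larger constant — will require being somewhat precise about which single-shared-edge configurations actually dominate, namely those where the shared edge is incident to a source or target, versus interior single-edge overlaps which should themselves be $o(1)$ relative to the boundary ones.
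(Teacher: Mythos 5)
Your plan follows essentially the same route as the paper's (sketched) proof: expand the variance over pairs of potential simple paths, observe that by edge-independence only pairs sharing at least one edge contribute, show the diagonal and independent parts are $o(1)E(SP_{r}(\mathbf{S},\mathbf{T}))^{2}$ via $1/E(SP_{r}(\mathbf{S},\mathbf{T}))=o(1)$, and isolate the dominant contribution from pairs whose shared edge is the first or last edge of a path, where a worst-case $d_{max}$ bound on the second (or second-to-last) node yields the $(1+o(1))\frac{S}{S_2}\bigl(\frac{d_{max}}{\sum_{s\in\mathbf{S}}d_{s}}+\frac{d_{max}}{\sum_{t\in\mathbf{T}}d_{t}}\bigr)$ term. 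The only real difference is in the interior-overlap case, which you handle by direct combinatorial accounting of overlap patterns (correctly killed by $r^{2}p_{max}\frac{S}{S_2}=o(1)$ and $r\frac{S}{S_2}=o(1)$), whereas the paper identifies that contribution with $E(NBP_{r}(\mathbf{S},\mathbf{T}))-E(SP_{r}(\mathbf{S},\mathbf{T}))=o(1)E(SP_{r}(\mathbf{S},\mathbf{T}))$ using Lemma \ref{lem:exp} and Theorem \ref{thm:Epaths}; both are valid under the stated hypotheses.
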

\begin{proof}
We provide a sketch of the proof.  Intuitively, $SP_{r}(\mathbf{S},\mathbf{T}) = \sum X_{\alpha}$ is a sum of Bernoulli random variables (corresponding to the existence of a simple path) each with a low probability of success.  Suppose we only consider the pairs of Bernoulli random variables that are independent that contribute to the variance.  We could then directly approximate the summation as a Poisson random variable.  But since the variance of a Poisson random variable equals the square of the expected value and we assumed that $\frac{1}{E(SP_{r}(\mathbf{S},\mathbf{T}))} = o(1)$, this contribution relative to the expected value squared, is negligible.  Hence, we are only interested in identifying the dependent pairs of Bernoulli random variables $X_{\alpha_1},X_{\alpha_2}$, from the summation  $\sum X_{\alpha}$.  

\noindent \textbf{Case 1:} If $X_{\alpha_1}, X_{\alpha_2}$ share a common edge and that edge is not the first or last edge of $X_{\alpha_2}$, we claim that the contribution to the variance, is negligible relative to .$E(SP_{r}(\mathbf{S},\mathbf{T}))^{2}$. Fix an $\alpha_{1}$.  Then define the set $D(\alpha_{1})$ to consist of all indices $\beta\neq\alpha_1$ such that $X_{\alpha_1},X_{\beta}$ are dependent random variables and fall under Case 1.  We will justify that 
\begin{equation}
E(\sum_{\alpha_1}X_{\alpha_1}\sum_{\beta\in D(\alpha_{1})}X_{\beta}) \approx \sum_{\alpha_1}E(X_{\alpha_1})[E(NBP_{r}(\mathbf{S},\mathbf{T}))-E(SP_{r}(\mathbf{S},\mathbf{T}))] ,
\end{equation}
as if we only consider $X_{\beta}$ such that $\beta \in D(\alpha_{1})$, we are stipulating that one of the nodes repeats in the path.  This contribution should be similar to the difference of the number of nonbacktracking paths and simple paths, as the difference identifies all paths where nodes may appear more than once.  Since this difference between the number of nonbacktracking paths and simple paths is at most $o(1)E(SP_{r}(\mathbf{S},\mathbf{T}))$,, we conclude that 

\begin{equation}
E(\sum_{\alpha_1}X_{\alpha_1})o(1)E(SP_{r}(\mathbf{S},\mathbf{T})) = E(SP_{r}(\mathbf{S},\mathbf{T}))^{2}o(1)
\end{equation}
\noindent \textbf{Case 2:} Suppose that $X_{\alpha_1}, X_{\alpha_2}$ share a common edge and that edge is the first or last edge of $X_{\alpha_2}$.  (Note that from Theorem \ref{thm:Epaths} when we computed $E(NBP_{r}(\mathbf{S},\mathbf{T}))$, we managed to circumvent this issue by assuming the first and last edge do not repeat.)  Without loss of generality, we will assume that the first edge in $X_{\alpha_2}$ repeats.  In the worst case scenario, the expected degree of the second node in the path is $d_{max}$.  Define $\Delta(\alpha_{1})$ to consist of all indices $\beta\neq\alpha_1$ such that $X_{\alpha_1},X_{\beta}$ are dependent random variables and fall under this subcase of Case 2 (where the first edge repeats).  Let node $s_{max}$ have expected degree $d_{max}$.  Consequently, we claim that

\begin{equation}E(\sum_{\alpha_1}X_{\alpha_1}\sum_{\beta\in \Delta(\alpha_{1})}X_{\beta}) \approx E(\sum_{\alpha_1}X_{\alpha_1})\sum_{t\in\mathbf{T}}E(SP_{r-1}(s_{max},t)) = 
\end{equation}
\begin{equation}
\sum_{s\in\mathbf{S},t\in\mathbf{T}}E(SP_{r}(\mathbf{S},\mathbf{T}))\sum_{t\in\mathbf{T}}E(SP_{r-1}(s_{max},t)),
\end{equation}

where the above equation follows as the inclusion of additional repeating edges is negligible (due to the argument from \textbf{Case 1}) and by assuming the worst case scenario that the second node in the simple path corresponding to $X_{\alpha_{1}}$ has expected degree $d_{max}$, we compute the number of simple paths of length $r-1$ from $s_{max}$ to $t$ for each $t\in\mathbf{T}$.  By an application of Theorem \ref{thm:Epaths} and Lemma \ref{lem:exp}, this gives us a contribution of  $$E(SP_{r}(\mathbf{S},\mathbf{T}))^{2}(1+o(1))\frac{S}{S_2}(\frac{d_{max}}{\sum_{s\in\mathbf{S}}d_{s}})$$ to the variance.
\newline\newline
Considering the case where the last edge has already been visited before is analogous.
\end{proof}

At this juncture, we can show that if the source and target have sufficiently large expected degree, then the number of simple paths of prescribed length connecting the two nodes should approximate the number of nonbacktracking paths.  
\begin{cor}\label{cor:main1}
Consider two nodes $s,t$ with expected degrees $d_{s},d_{t}$.  Suppose that $\frac{d_{max}S}{S_2\min(d_{s},d_{t})} = o(1)$, $r\frac{S}{S_2} = o(1)$, $r^{2}\frac{S}{S_2}p_{max}=o(1)$ and $\frac{1}{E(SP_{r}(s,t))} = o(1)$; then with high probability,

$$\frac{NBP_{r}(s,t)}{SP_{r}(s,t)} = 1 + o(1).$$
\end{cor}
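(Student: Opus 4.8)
The plan is to derive the corollary from Theorem \ref{thm:varSP} by a standard second-moment (Chebyshev) argument. First I would observe that $SP_r(s,t)$ and $NBP_r(s,t)$ are nonnegative integer-valued, and that $NBP_r(s,t) \geq SP_r(s,t)$ deterministically, since every simple path is nonbacktracking. Hence the ratio $NBP_r(s,t)/SP_r(s,t)$ is always at least $1$ (on the event $SP_r(s,t) > 0$), and to show it is $1 + o(1)$ with high probability it suffices to show two things: that $SP_r(s,t) > 0$ with high probability, and that the nonnegative ``excess'' $NBP_r(s,t) - SP_r(s,t)$ is $o(SP_r(s,t))$ with high probability.

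For the first point, I would apply Theorem \ref{thm:varSP} with the singleton sets $\mathbf{S} = \{s\}$, $\mathbf{T} = \{t\}$ (the hypotheses translate exactly: $\frac{d_{max}}{\sum_{s\in\mathbf{S}}d_s} = \frac{d_{max}}{d_s}$ and similarly for $t$, so the condition $\frac{d_{max}S}{S_2\min(d_s,d_t)} = o(1)$ is precisely what makes the variance bound's bracketed term $o(1)$, and the remaining hypotheses match). This gives $\mathrm{var}(SP_r(s,t)) = o\bigl(E[SP_r(s,t)]^2\bigr)$, so by Chebyshev $SP_r(s,t) = (1+o(1))E[SP_r(s,t)]$ with high probability; in particular $SP_r(s,t) > 0$ with high probability since $E[SP_r(s,t)] \to \infty$ by the assumption $1/E[SP_r(s,t)] = o(1)$ together with Lemma \ref{lem:exp}.

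For the second point, I would bound $E[NBP_r(s,t) - SP_r(s,t)]$. Lemma \ref{lem:exp} gives a lower bound on $E[SP_r(s,t)]$ of essentially $p_{st}(S_2/S)^{r-1}(1 - O(r^2 p_{max} S/S_2))$, and Theorem \ref{thm:Epaths} gives an upper bound on $E[NBP_r(s,t)]$ of essentially $\frac{p_{st}(S_2/S)^{r-1}}{1 - S/S_2}\exp\bigl(O((rS/S_2)^2 p_{max})\bigr)$ (here I need $2r < S_2/S$, which follows from $rS/S_2 = o(1)$). Under the hypotheses $rS/S_2 = o(1)$ and $r^2 p_{max} S/S_2 = o(1)$, both the factor $\frac{1}{1-S/S_2}$ and the exponential are $1 + o(1)$, while the lower-bound correction is also $1 + o(1)$; so $E[NBP_r(s,t)] - E[SP_r(s,t)] = o(1)\cdot E[SP_r(s,t)]$. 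Since $NBP_r(s,t) - SP_r(s,t) \geq 0$, Markov's inequality then yields $NBP_r(s,t) - SP_r(s,t) = o\bigl(E[SP_r(s,t)]\bigr)$ with high probability. Combining with $SP_r(s,t) = (1+o(1))E[SP_r(s,t)]$ on the same high-probability event gives $\frac{NBP_r(s,t)}{SP_r(s,t)} = 1 + \frac{NBP_r(s,t)-SP_r(s,t)}{SP_r(s,t)} = 1 + o(1)$.

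I expect the only real subtlety to be bookkeeping: checking that the three ``$o(1)$'' hypotheses of the corollary really do feed correctly into (a) the variance estimate of Theorem \ref{thm:varSP} with singleton source/target sets, and (b) the ratio of the Lemma \ref{lem:exp} lower bound to the Theorem \ref{thm:Epaths} upper bound, so that the leading factor $p_{st}(S_2/S)^{r-1}$ cancels cleanly and all correction factors collapse to $1 + o(1)$. The probabilistic part — Chebyshev for the denominator, Markov for the nonnegative numerator excess, then a union bound over the two bad events — is routine. No new estimates should be needed beyond Lemma \ref{lem:exp}, Theorem \ref{thm:Epaths}, and Theorem \ref{thm:varSP}.
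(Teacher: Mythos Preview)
Your proposal is correct and follows essentially the same approach as the paper: Chebyshev via the variance bound (Theorem~\ref{thm:varSP}) applied to singleton $\mathbf{S}=\{s\}$, $\mathbf{T}=\{t\}$ to concentrate $SP_r(s,t)$ near its mean, then Markov on the nonnegative excess $NBP_r(s,t)-SP_r(s,t)$ using the expectation comparison from Lemma~\ref{lem:exp} and Theorem~\ref{thm:Epaths}. Your write-up is in fact more explicit than the paper's sketch about which hypothesis feeds which ingredient (in particular that $\tfrac{d_{\max}S}{S_2\min(d_s,d_t)}=o(1)$ is exactly what kills the bracketed term in Theorem~\ref{thm:varSP} for singletons).
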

\begin{proof}
	To prove the statement, it suffices to show that \begin{equation}\label{eq:keycor} \frac{NBP_{r}(s,t)-SP_{r}(s,t)}{SP_{r}(s,t)}\rightarrow 0.
	\end{equation}
	
	First we invoke Chebyshev's Inequality and Theorem \ref{thm:Epaths} to show that with high probability 
	$SP_{r}(s,t) \rightarrow E(SP_{r}(s,t))$ and substitute the denominator in (\ref{eq:keycor}) with its expected value.  By invoking Theorem \ref{thm:Epaths} and Lemma \ref{lem:exp} we can show that with high probability $NBP_{r}(s,t)-SP_{r}(s,t)$ is much smaller than $E(SP_{r}(s,t))$ using Markov's inequality.   
\end{proof}

We now wish to extend Corollary \ref{cor:main1}, where $d_{s}$ (or $d_{t}$) may not satisfy the condition that $\frac{d_{max}S}{S_2d_{s}} = o(1).$  To construct such an extension, define the $kth$ neighborhood of the node $s$, $N_{k}(s) = \{ i \in V: d(s,i) = k\}$.  We first note that if $s$ is in the giant connected component, there exists a $k$, such that the expected number of edges from the $kth$ neighborhood of $s$, $\sum_{i \in N_{k}(s)}d_{i}$ is sufficiently large.  Consequently, if we apply Corollary \ref{cor:main1} to the $kth$ neighborhood of $s$, we can bound the ratio of the number of simple and nonbacktracking paths.  This leads us to our main result.  

\begin{cor}\label{cor:main}
Suppose $s$ and $t$ are part of the giant connected component.  If the following conditions hold: 
\begin{itemize}
\item $r\frac{S}{S_2} = o(1)$
\item $r^{2}\frac{S}{S_2}p_{max}=o(1)$
\item $\frac{1}{E(SP_{r}(s,t))} = o(1)$
\item $\frac{d_{max}\ln\ln N}{\sqrt{S}}= o(1)$,
\end{itemize}
then with high probability, $$\frac{NBP_{r}(s,t)}{SP_{r}(s,t)} = 1 + o(1).$$
\end{cor}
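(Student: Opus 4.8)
The plan is to bootstrap Corollary \ref{cor:main1} by replacing the source $s$ (and symmetrically $t$) with a neighborhood shell of $s$ whose total expected degree is large enough to meet the hypothesis $\frac{d_{max}S}{S_2 \sum_{i\in N_k(s)} d_i} = o(1)$ of Corollary \ref{cor:main1}. First I would argue that since $s$ lies in the giant component, there is a small integer $k = k(s)$ (in fact $k = O(\log\log N)$ suffices under standard Chung--Lu branching-process estimates) such that $\sum_{i\in N_k(s)} d_i \gg d_{max}S/S_2$; the condition $\frac{d_{max}\ln\ln N}{\sqrt{S}} = o(1)$ is exactly what guarantees that the expected number of edges leaving the $k$-th shell grows past the required threshold before $k$ gets large enough to interfere with the path-length hypotheses. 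Pick such a $k_s$ for $s$ and $k_t$ for $t$.

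Next I would set $\mathbf{S} = N_{k_s}(s)$ and $\mathbf{T} = N_{k_t}(t)$ and relate $SP_r(s,t)$ and $NBP_r(s,t)$ to $SP_{r'}(\mathbf{S},\mathbf{T})$ and $NBP_{r'}(\mathbf{S},\mathbf{T})$ for the reduced length $r' = r - k_s - k_t$. A path of length $r$ from $s$ to $t$ decomposes (essentially) into a length-$k_s$ initial segment into $\mathbf{S}$, a middle segment counted by the shell-to-shell quantity, and a length-$k_t$ final segment; for $r$ of size $O(\log n)$ and $k_s, k_t = O(\log\log n)$ we have $r' = (1+o(1))r$, so all the scaling hypotheses ($r'\frac{S}{S_2} = o(1)$, $r'^2\frac{S}{S_2}p_{max} = o(1)$) carry over, and $\frac{1}{E(SP_{r'}(\mathbf{S},\mathbf{T}))} = o(1)$ follows from $\frac{1}{E(SP_r(s,t))} = o(1)$ together with Lemma \ref{lem:exp}. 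Then I would invoke Theorem \ref{thm:varSP} with these $\mathbf{S},\mathbf{T}$: the variance bound now has $\frac{d_{max}}{\sum_{s'\in\mathbf{S}}d_{s'}}$ and $\frac{d_{max}}{\sum_{t'\in\mathbf{T}}d_{t'}}$ in place of $\frac{d_{max}}{d_s},\frac{d_{max}}{d_t}$, and by the choice of $k_s, k_t$ these are $o(S_2/S)$, so $var(SP_{r'}(\mathbf{S},\mathbf{T})) = o(1) \cdot E(SP_{r'}(\mathbf{S},\mathbf{T}))^2$. As in the proof of Corollary \ref{cor:main1}, Chebyshev then gives $SP_{r'}(\mathbf{S},\mathbf{T}) = (1+o(1))E(SP_{r'}(\mathbf{S},\mathbf{T}))$ w.h.p., and Markov applied to $E[NBP_{r'}(\mathbf{S},\mathbf{T})] - E[SP_{r'}(\mathbf{S},\mathbf{T})]$ (bounded via Theorem \ref{thm:Epaths} and Lemma \ref{lem:exp}) shows the nonbacktracking excess is $o(1)$ relative to the expectation, yielding $\frac{NBP_{r'}(\mathbf{S},\mathbf{T})}{SP_{r'}(\mathbf{S},\mathbf{T})} = 1 + o(1)$ w.h.p.

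Finally I would transfer this ratio back to $s$ and $t$: the initial and final segments of length $k_s$ and $k_t$ are common to the simple and nonbacktracking counts, so conditioning on them the ratio $NBP_r(s,t)/SP_r(s,t)$ is controlled by the shell-to-shell ratio up to a negligible correction (one must check that a backtracking step cannot be hidden entirely inside the short end-segments in a way that is not already $o(1)$ in count, which follows from the same Theorem \ref{thm:Epaths} estimate applied at length $k_s$ or $k_t$). I expect the main obstacle to be the decomposition step: making the "path $=$ segment $+$ middle $+$ segment" correspondence precise enough to equate expectations and variances, since shells $N_k(s)$ are themselves random, the segments are not independent of the middle portion, and one must ensure the $k$-th shell is nonempty and has the claimed total degree with high probability (not just in expectation). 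Handling this cleanly — perhaps by working on the event that the breadth-first exploration from $s$ and from $t$ behaves like its branching-process approximation up to depth $O(\log\log N)$, which is where $\frac{d_{max}\ln\ln N}{\sqrt S} = o(1)$ is really used — is the crux of the argument.
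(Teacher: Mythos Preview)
Your approach is essentially the paper's: pass to the shells $N_k(s)$, $N_l(t)$, apply Theorem~\ref{thm:varSP} to the shell-to-shell counts at length $r-k-l$, and then Chebyshev/Markov as in Corollary~\ref{cor:main1}. The resolution of the ``main obstacle'' you flag is cleaner than you anticipate: the paper chooses $k$ so that $\sum_{i\in N_{\le k}(s)} d_i$ lies between $\frac{d_{max}S}{S_2}\sqrt{\ln\ln N}$ and $d_{max}\ln\ln N$, and then uses the hypothesis $\frac{d_{max}\ln\ln N}{\sqrt{S}}=o(1)$ to conclude that the induced subgraphs on $N_{\le k}(s)$ and $N_{\le l}(t)$ are \emph{trees} (and disjoint) with high probability---hence $NBP_k(s,s_*)=NBP_l(t_*,t)=1$ for every $s_*\in N_k(s)$, $t_*\in N_l(t)$, and the sandwich
\[
\sum_{s_*,t_*} SP_{r-k-l}(s_*,t_*)\;\le\; SP_r(s,t)\;\le\; NBP_r(s,t)\;\le\;\sum_{s_*,t_*} NBP_{r-k-l}(s_*,t_*)
\]
follows without any further decomposition bookkeeping.
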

\begin{proof}
First we will prove that given that $s$ is part of the giant connected component, then with high probability there exists a $k$ such that the subgraph formed by the nodes $N_{\leq k}(s) = \cup_{j=0}^{k}N_{j}(s)$ is a tree and the expected number of edges in the subgraph $N_{\leq k}(s)$ is bounded between $\frac{d_{max}S}{S_2}\sqrt{\ln\ln N}$ and $d_{max}\ln\ln N$. 

To see this, since $s$ is part of the giant component, there exists a first $k$ such that $\sum_{i \in N_{\leq k-1}(s)}d_{i}\leq \frac{d_{max}S}{S_2}\sqrt{\ln\ln N}\leq \sum_{i \in N_{\leq k}(s)}d_{i}$.  An application of Markov's Inequality and Theorem \ref{thm:Epaths} shows that by using the fact that $\sum_{i \in N_{\leq k-1}(s)}d_{i}\leq \frac{d_{max}S}{S_2}\sqrt{\ln\ln N}$, with high probability  $$\sum_{i \in N_{\leq k}(s)}d_{i}\leq d_{max}\ln\ln N.$$

Furthermore, since $\sum_{i \in N_{\leq k}(s)}d_{i}\leq d_{max}\ln\ln N$ and by assumption $\frac{d_{max}\ln\ln N}{\sqrt{S}}= o(1)$, with high probability, the subgraph formed by $N_{\leq k}(s)$ must be a tree.  

Applying this fact both to nodes $s$ and $t$, tells us that there exists a $k$ and $l$ such that $N_{\leq k}(s)$ and $N_{\leq l}(t)$ are trees and the expected number of edges is bounded between $\frac{d_{max}S}{S_2}\sqrt{\ln\ln N}$ and $d_{max}\ln\ln N$.  Furthermore, with high probability $N_{\leq k}(s)\cap N_{\leq l}(t) = \emptyset$.  We then have that

\begin{multline}\label{eq:asymp}
\sum_{\substack{s_{*}\in N_{k}(s)\\ t_{*}\in N_{l}(t)}}SP_{r-k-l}(s_*,t_*) \leq SP_{r}(s,t) \leq NBP_{r}(s,t) \leq \\ \sum_{\substack{s_{*}\in N_{k}(s)\\ t_{*}\in N_{l}(t)}}NBP_{k}(s,s_*)NBP_{r-k-l}(s_*,t_*)NBP_{l}(t_*,t).
\end{multline}

But since the subgraphs formed by $N_{\leq k}(s)$ and $N_{\leq l}(t)$ are trees (with high probability), for every $s_* \in N_{k}(s)$ and for every $t_* \in N_{l}(t)$, 
\begin{equation} \label{eq:tree} 1 = NBP_{k}(s,s_*) = NBP_{l}(t_*,t).
\end{equation}
Substituting (\ref{eq:tree}) into (\ref{eq:asymp}) tells us that with high probability,

\begin{equation}\label{eq:asymp2}
\sum_{\substack{s_{*}\in N_{k}(s)\\ t_{*}\in N_{l}(t)}}SP_{r-k-l}(s_*,t_*) \leq SP_{r}(s,t) \leq NBP_{r}(s,t) \leq \sum_{\substack{s_{*}\in N_{k}(s)\\ t_{*}\in N_{l}(t)}}NBP_{r-k-l}(s_*,t_*).
\end{equation}

Finally applying Theorem \ref{thm:varSP} tells us that, 

\begin{equation}\label{eq:asymp3}
\sum_{\substack{s_{*}\in N_{k}(s)\\ t_{*}\in N_{l}(t)}}SP_{r-k-l}(s_*,t_*) \leq SP_{r}(s,t) \leq NBP_{r}(s,t) \leq (1+o(1))\sum_{\substack{s_{*}\in N_{k}(s)\\ t_{*}\in N_{l}(t)}}SP_{r-k-l}(s_*,t_*).
\end{equation}

Consequently, we conclude that with high probability, $\frac{NBP(s,t)}{SP(s,t)} = 1 + o(1).$

\end{proof}
Now that we have shown that the number of nonbacktracking paths (of prescribed length) approximates the number of simple paths for a wide range of parameters under the Chung-Lu random graph model, we illustrate how to compute almost shortest nonbacktracking paths using the algorithm from Section 2.  In particular, we computed {\em paths} bounded by length $D$ from node $s$ to $t$ by considering a partial path $(x_{m},x_{m-1}....,x_{1},x_{0})$, where $x_{0} = t$, measuring the distance traveled so far $D_{*} = \sum_{i=0}^{m-1}d(x_{m-i},x_{m-i-1})$ and adding a new node to the partial path $x_{m+1}$ if $x_{m+1}$ is a neighbor of $x_{m}$ and $d(s,x_{m+1})+d(x_{m+1},x_{m}) \leq D - D_{*}$.  Iteratively adding nodes in this way would yield a path from node $s$ to node $t$ with length at most $D$.  Consequently to find sufficiently short nonbacktracking paths, it will be helpful to compute the minimum length of a nonbacktracking path between two nodes under some constraints.  This motivates the following definition.

\begin{defn} For any edge $(a,b)$ in the graph $G$, denote $d_{NBP}(a\rightarrow b,t)$ as the length of the shortest nonbacktracking path of the form $(a,b,...,t)$. 
\end{defn}

Now to efficiently compute $d_{NBP}(a\rightarrow b,t)$, it will be helpful to consider a (directed) shortest path tree $T = (V,E_{T})$ for the graph $G$. [As $T$ is directed, note that if the edge $(a,b)$ appears in the graph $T$, this does not imply that $(b,a)$ appears in $T$.  Furthermore, given an edge $(a,b)$ in a directed graph, we read the edge as going from node $a$ to node $b$; that is, $a$ is an incoming neighbor of $b$.]   In particular, if $(b,a)$ is an edge in $T$, then the shortest path of the form $(a,b,...,t)$ is $(a,b,a,...,t)$, which would not be nonbacktracking.  Consequently, we claim that we have the following recursion for computing $d_{NBP}(a\rightarrow b,t)$ when $(b,a)$ is an edge that appears in $T$.  

\begin{lem}\label{lem:QSPcomp}For a positively weighted graph $G$,

\[ d_{NBP}(a\rightarrow b,t) = d(a,b) + \min_{n\in Nbr(b)}  \left\{
\begin{array}{ll}
       d(b,n) + d(n,t) & n \neq a \hspace{3pt} \text{and}\hspace{3pt} (n, b) \notin E_{T}  \\
      d_{NBP}(b\rightarrow n,t) & n \neq a \hspace{3pt}\text{and}\hspace{3pt} (n, b) \in E_{T} \\
      \infty & n = a\\
\end{array} 
\right. \]
\end{lem}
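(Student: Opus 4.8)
The plan is to prove the recursion by a case analysis on the second edge of a shortest nonbacktracking path of the form $(a,b,\ldots,t)$, paralleling the structure of the three cases in the claimed formula. First I would fix the edge $(a,b)$ and observe that any nonbacktracking path starting $(a,b,\ldots,t)$ has the form $(a,b,n,\ldots,t)$ for some neighbor $n$ of $b$, and by the nonbacktracking constraint $x_1 \neq x_3$ we must have $n \neq a$. This immediately accounts for the $n = a$ row: no nonbacktracking path can use $n = a$ as its third node, so assigning it cost $\infty$ is correct and does not affect the minimum. Thus $d_{NBP}(a \rightarrow b, t) = d(a,b) + \min_{n \in Nbr(b),\, n \neq a} \ell(n)$, where $\ell(n)$ is the length of the shortest nonbacktracking path of the form $(b,n,\ldots,t)$, i.e. $\ell(n) = d_{NBP}(b \rightarrow n, t)$.

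Next I would show that $d_{NBP}(b \rightarrow n, t)$ collapses to the simpler expression $d(b,n) + d(n,t)$ precisely when $(n,b) \notin E_T$. The key observation is that $d(n,t)$ is realized by a shortest path in $G$, and since $T$ is a shortest path tree rooted (conceptually) so that it encodes shortest paths to $t$, the shortest path from $n$ to $t$ begins with the tree edge out of $n$ toward $t$; call its second node $w$. If $(n,b) \notin E_T$, then $w \neq b$ (the unique tree edge leaving $n$ toward $t$ is not $(n,b)$), so the path $(b,n,w,\ldots,t)$ has $w \neq b$ and is therefore nonbacktracking at its first turn; one then argues that the remainder $(n,w,\ldots,t)$, being a shortest path in $G$, can be taken nonbacktracking without increasing its length (a shortest path in a positively weighted graph is automatically simple, hence nonbacktracking). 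Therefore $d_{NBP}(b \rightarrow n, t) = d(b,n) + d(n,t)$ in this case, and this matches the first row. In the complementary case $(n,b) \in E_T$, the genuinely shortest $n$-to-$t$ path starts by going back to $b$, which is forbidden for a path of the form $(b,n,\ldots)$; so we must recurse, and by definition the best we can do is $d_{NBP}(b \rightarrow n, t)$, which is exactly the second row.

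The main obstacle I anticipate is the subtle point in the first case: justifying that when $(n,b) \notin E_T$, the quantity $d(n,t)$ is actually attainable by a nonbacktracking path whose first step out of $n$ avoids $b$. One must be careful because a priori there could be several shortest $n$-to-$t$ paths, some starting toward $b$ and some not; the argument needs that the shortest path tree $T$ picks out one that does not start with $(n,b)$ exactly when $(n,b) \notin E_T$, and that following tree edges from $n$ to $t$ gives a shortest path. I would handle this by invoking the standard property of a shortest path tree: the unique $n$-to-$t$ path inside $T$ has length $d(n,t)$, and its first edge is the tree edge at $n$, which is $(n,b)$ if and only if $(n,b) \in E_T$. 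Combined with the fact that any shortest path in a positively weighted graph is simple (no repeated vertices, hence nonbacktracking), this closes the gap. A secondary, purely bookkeeping obstacle is confirming that taking the minimum over all $n \in Nbr(b)$ including the $n = a$ term with value $\infty$ is harmless, which follows since $Nbr(b)$ is nonempty (it contains $a$) and, whenever a finite-length nonbacktracking path $(a,b,\ldots,t)$ exists, some $n \neq a$ attains a finite value, so the $\infty$ term never changes the minimum.
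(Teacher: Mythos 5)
Your proposal is correct and follows essentially the same route as the paper: a case analysis on the third node $n$ of the path, using the shortest path tree $T$ to decide whether the shortest continuation out of $n$ backtracks to $b$, and recursing via $d_{NBP}(b\rightarrow n,t)$ when it does. Your treatment of the case $(n,b)\notin E_T$ is in fact slightly more careful than the paper's, which simply asserts that the shortest path of the form $(b,n,\ldots,t)$ is simple, whereas you justify it via the tree edge at $n$ and the simplicity of shortest paths under positive weights.
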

\begin{proof}
To find the length of the shortest nonbacktracking path from $a$ to $t$ where the first edge is $(a,b)$, we can look at the lengths of the shortest nonbacktracking path of the form $(a,b,n,...,t)$, where $n$ is a neighbor of $b$ and we minimize over all choices for $n$. \newline\newline 
 If $n = a$, then there is no such nonbacktracking path of the form $(a,b,n...,t)$ and we define the length as $\infty$.
\newline \newline \noindent
Alternatively, if $n\neq a$ and $(n,b) \notin E_{T}$, then it follows that the shortest path of the form $(b,n,...,t)$ is a simple path (and hence nonbacktracking).  In particular the length of the path is $d(b,n) + d(n,t)$.
\newline \newline \noindent
Finally, suppose that $(n,b) \in E_{T}$, $n\neq a$ and consider a nonbacktracking path of the form $(a,b,n,...,t)$.  Under these conditions, it follows that $(a,b,n,...,t)$ is nonbacktracking if and only if the path formed by deleting the first two edges of $(a,b,n,...,t)$, $(n,...,t)$, is a nonbacktracking path as well.  Consequently we can denote the length of the shortest nonbacktracking path of the form $(a,b,n,...,t)$ as 
$$d(a,b) + d_{NBP}(b\rightarrow n,t).$$ 
\end{proof}

\noindent \textbf{Remark:} In the event that $(b,a)$ is not an edge in $E_{T}$, then the shortest nonbacktracking path of the form $(a,b,...,t)$ is a simple path and hence $d(a\rightarrow b,t) = d(a,b) + d(b,t)$.  Otherwise, (as mentioned previously) if $(b,a)$ is an edge in $E_{T}$, then Lemma \ref{lem:QSPcomp} is especially helpful for computing, $d_{NBP}(a\rightarrow b,t)$. To efficiently solve for the length of the shortest nonbacktracking path of the form $(a,b,...,t)$, we start by using Lemma \ref{lem:QSPcomp} to solve for $d_{NBP}(a\rightarrow b,t)$, where $b$ has $0$ incoming edges in $T$. We can subsequently solve for the remaining nonbacktracking path distances by identifying nodes $b$ such that for all incoming neighbors $n$ in $T$, $d_{NBP}(b\rightarrow n,t)$ is known and invoke Lemma \ref{lem:QSPcomp} to compute the nonbacktracking path distance.  Continuing this process yields an $O(m+n)$ algorithm for computing the lengths of the shortest nonbacktracking paths between two nodes, where the first edge is fixed.  

We can then generalize \hyperlink{alg1}{Algorithm 1} to find only nonbacktracking paths in the following manner.  Intially, we compute the lengths of the shortest nonbacktracking paths $d_{NBP}(a\rightarrow b,t)$, for all edges $(a,b)$.  Subsequently, for each node $a$, we sort $a$'s neighbors,$n$ according to $d_{NBP}(a\rightarrow n,t)$.  Then once we have determined that there exists a nonbacktracking path of length bounded by $D$ of the form $(x_{0},....,x_{m},...,t)$, we can determine if there exists a nonbacktracking path of length bounded by $D$ of the form $(x_{0},....,x_{m},x_{m+1},...,t)$, where $x_{m+1}$ is a neighbor of $x_{m}$ by checking that $x_{m+1} \neq x_{m-1}$ and that $d_{NBP}(x_{m}\rightarrow x_{m+1},t) \leq D - \sum_{i=0}^{m-1}d(x_{i},x_{i+1})$.  It then follows from the analysis of \hyperlink{alg1}{Algorithm 1}, that the time complexity for identifying almost shortest nonbacktracking paths between two nodes is the same for finding almost shortest paths between two nodes.  

\subsection{Simulations for the Ratio of Paths to Simple Paths}

For an undirected graph, the presence of nodes of high degree can influence the ratio of the number of simple paths to non-simple paths, as illustrated in \cite{castellano2010thresholds,castellano2017topological}.  For this reason, we considered the problem of computing almost shortest nonbacktracking  paths, as they are easy to compute and for a more flexible parameter regime, the number of simple paths asymptotically approximates the number of nonbacktracking paths of the same length in the Chung-Lu random graph model.  

Even so, for many graphs we can approximate the number of simple paths by the number of paths between two nodes. To better understand the relationship,  we ran numeric simulations.  Intuitively speaking, given a collection of graphs with a fixed number of high degree nodes (to ensure a substantial number of nonsimple paths for sufficiently large $r$), the claim is that graphs associated with a larger $\frac{S_2}{S}$  will result in a larger percentage of simple paths of length $r$; that is, $\frac{S_2}{S}$ is related to the expected number of neighbors of a node and increasing the number of neighbors of a node will result in more new (simple) paths.  

To justify this claim, we constructed realizations of Chung-Lu graphs with expected degree sequences where we fixed the average degree, varied $\frac{S_2}{S}$, and selected a fixed number of nodes to have expected degree equal to $\sqrt{S}$ .  In particular, we used a Markov Chain Monte Carlo method similar to \cite{lu2011generating} for randomly generating expected degree sequences such that any realization has a fixed expected average degree of $8$. Expected degree sequences satisfied a specified value for $\frac{S_2}{S}$ and consisted of at least four nodes with an expected degree of  $\sqrt{S}=\sqrt{8n}$, where $n$ is the number of nodes.  In these simulations, graphs consisted of $800$ nodes. We constructed $100$ such expected degree sequences.  Subsequently, from each expected degree sequence we constructed a realization from the Chung-Lu random graph model.  We then randomly chose 50 pairs of nodes (each with a minimum degree of 5) and calculated the ratio of the number of simple and non-simple almost shortest paths for various lengths.  Figure \ref{fig:ratio} presents clusters of three box plots of the ratio corresponding to realizations from each of these expected degree sequences.

\begin{figure}[h]
\centering
\includegraphics[scale=.5]{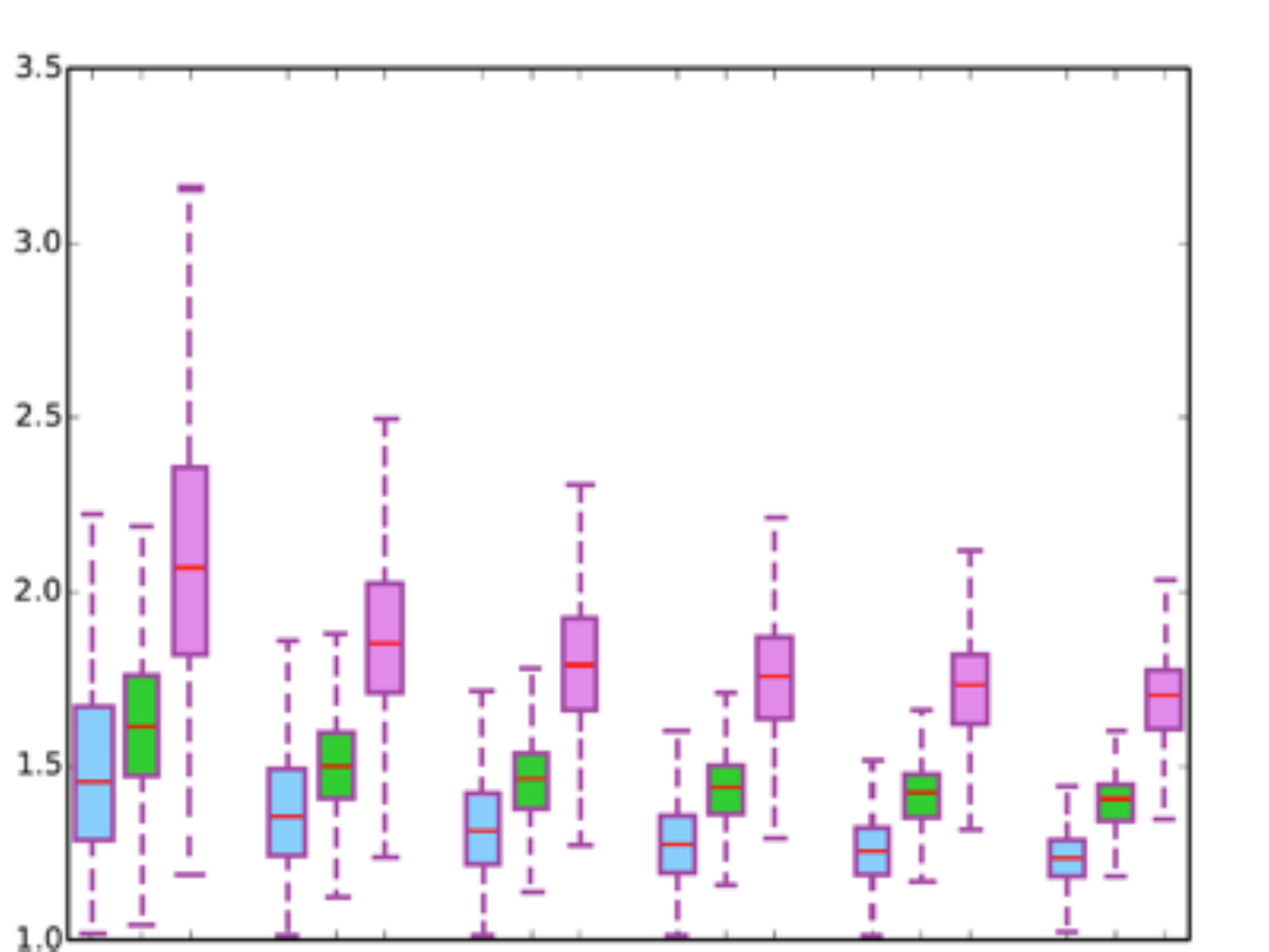}
\caption{A box plot of the ratio of the number of paths of length $r$ compared to the number of simple paths of length $r$.  Each cluster of three box plots represents different values of $r$, minimum distance $+ 2$, $+3$ and $+4$.  Similarly a more rightwards cluster of box plots on the x-axis denotes an increased value of $\frac{S_2}{S}$ in the expected degree sequence. Note that there appears to be an exponential growth in the number of non-simple paths (relative to the number of paths) as $r$ increases but that for modest values of $r$, this quantity is well controlled.} \label{fig:ratio}
\end{figure}

From the simulations, it appears that it is far more efficient to calculate the number of simple paths of length $r$ by calculating the number of  paths of length $r$ (and then removing paths that aren't simple) than using one of the existing algorithms for computing the number of simple paths directly as referenced in the introduction.  Furthermore, while the ratio of non-simple paths to simple paths grows as we increase $r$, in practice we must compute  {\em exponentially many} paths to see an exponential growth in the penalty for computing both simple and nonsimple paths.    

\section{Connectivity Simulations in the AS Graph}
We now consider an application of the almost shortest (simple) path problem to internet routing.  More precisely, we wish to inquire the robustness of the Autonomous System (AS) Graph and some random graph models to an edge deletion process and assess the connectivity of the graph.

Measuring {\em{connectivity}} for this application is rather ambiguous. For example under an edge deletion process \cite{lopez2007limited}  construes connectivity between two nodes as a measure that solely depends on the existence of a path of length bounded by some constant multiple of the diameter.  Alternatively, one can consider measuring connectivity by requiring the existence of a giant component or dynamical robustness   \cite{motter2002cascade,crucitti2004error,duch2007effect,he2009effect}.  

\begin{figure}[!htb]
\centering
\begin{subfigure}[h]{.4\textwidth}
\centering
\includegraphics[width=\linewidth]{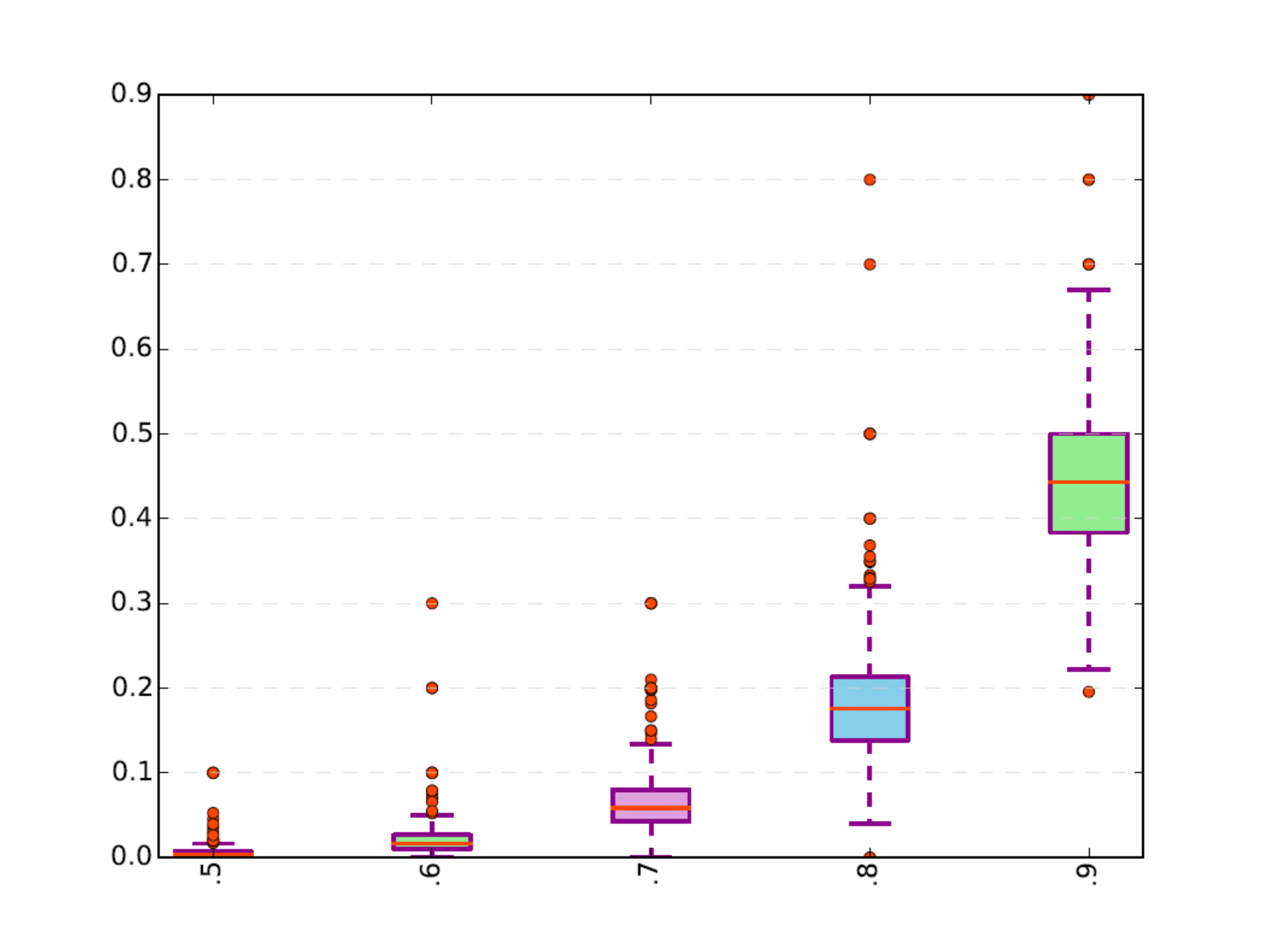}
        \caption{Box Plot of Percentage of Functional Number Almost Shortest Paths for an Erdos-Renyi Random Graph.}\label{fig:ERBox}
\end{subfigure}
\begin{subfigure}[h]{.4\textwidth}
\centering
\includegraphics[width=\linewidth]{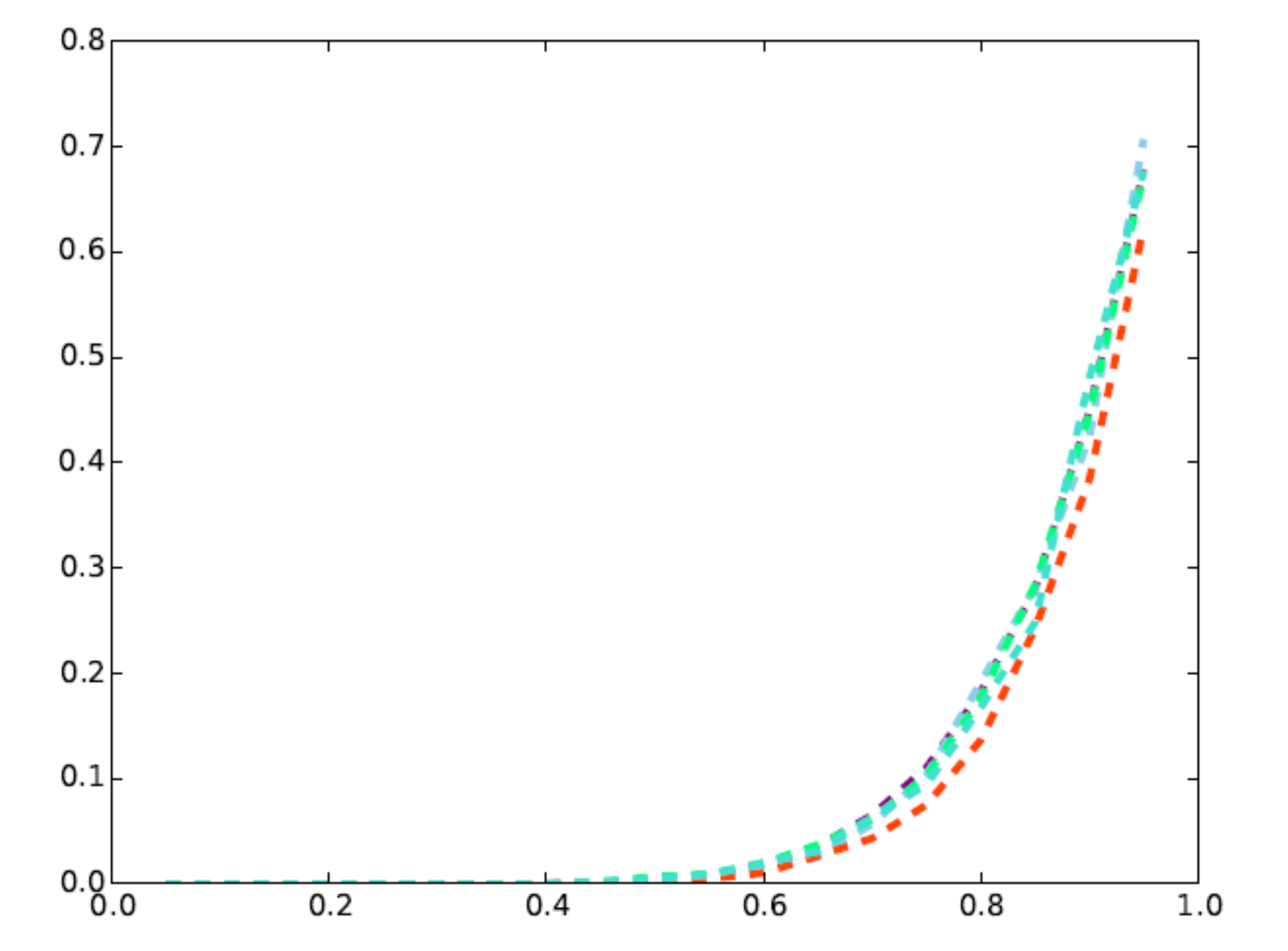}
	\caption{Median Percentage of Functional Almost Shortest Paths for an Erdos-Renyi Random Graph for 5 Different Nodes}\label{fig:ERMed}
\end{subfigure}

\begin{subfigure}[h]{.4\textwidth}
\centering
\includegraphics[width=\linewidth]{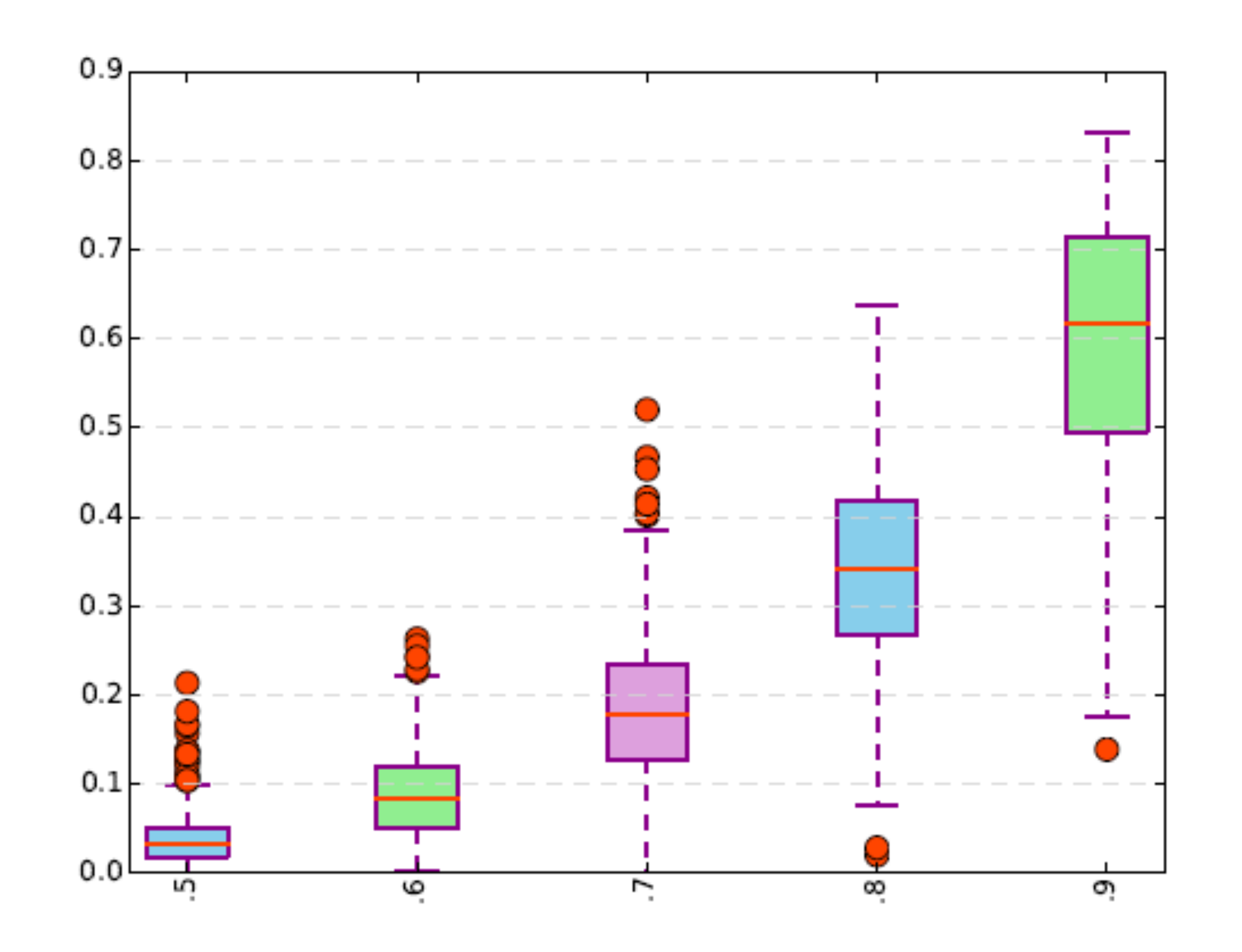}
	\caption{Box Plot of Percentage of Functional Number Almost Shortest Paths for a Chung-Lu Random Graph.}\label{fig:CLuBox}
\end{subfigure}
\begin{subfigure}[h]{.4\textwidth}
\centering
\includegraphics[width=\linewidth]{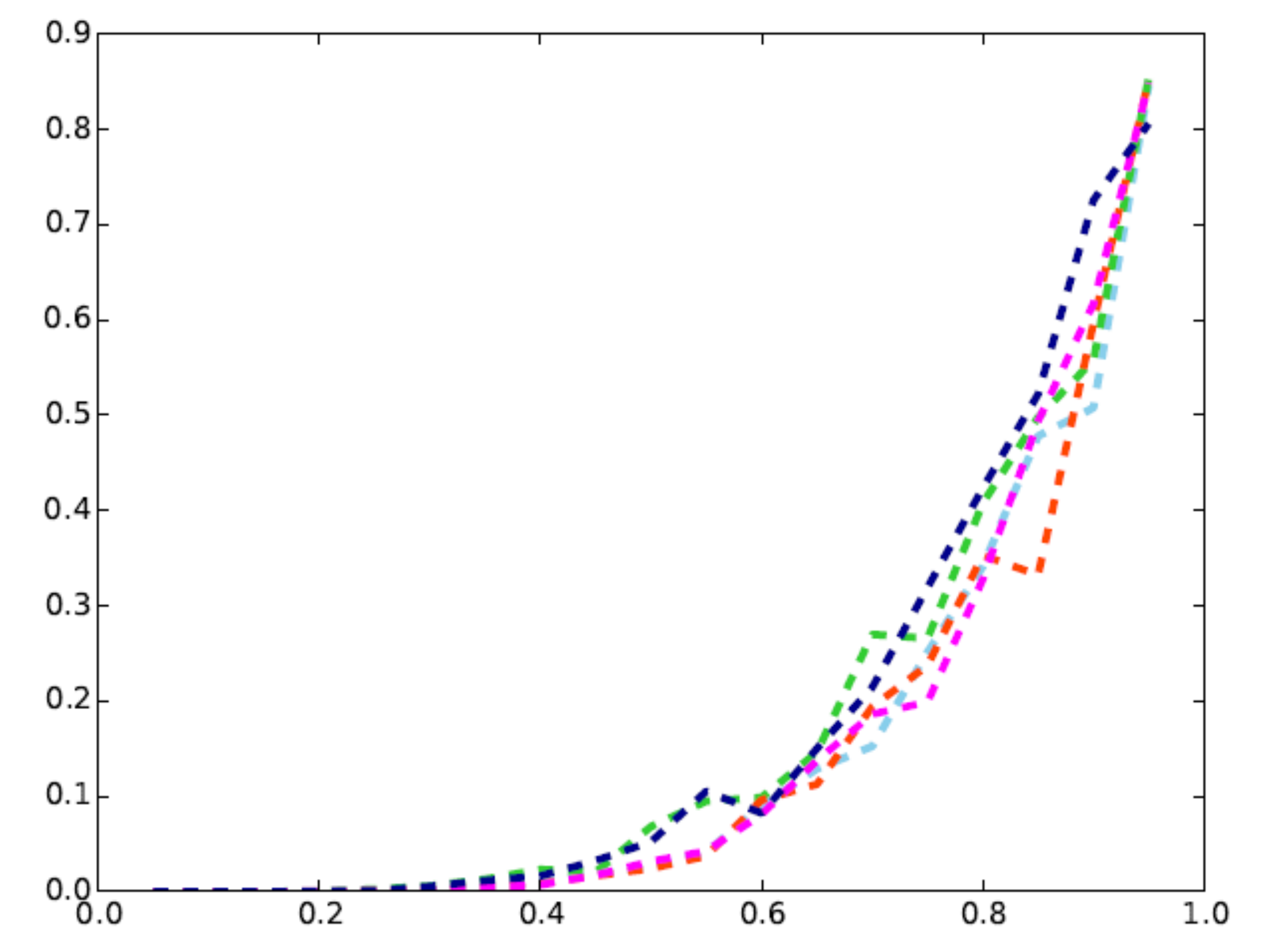}
\caption{Median Percentage of Functional Almost Shortest Paths for a Chung-Lu Random Graph for 5 Different Nodes}\label{fig:CLuMed}
\end{subfigure}

\begin{subfigure}[h]{.4\textwidth}
\centering
\includegraphics[width=\linewidth]{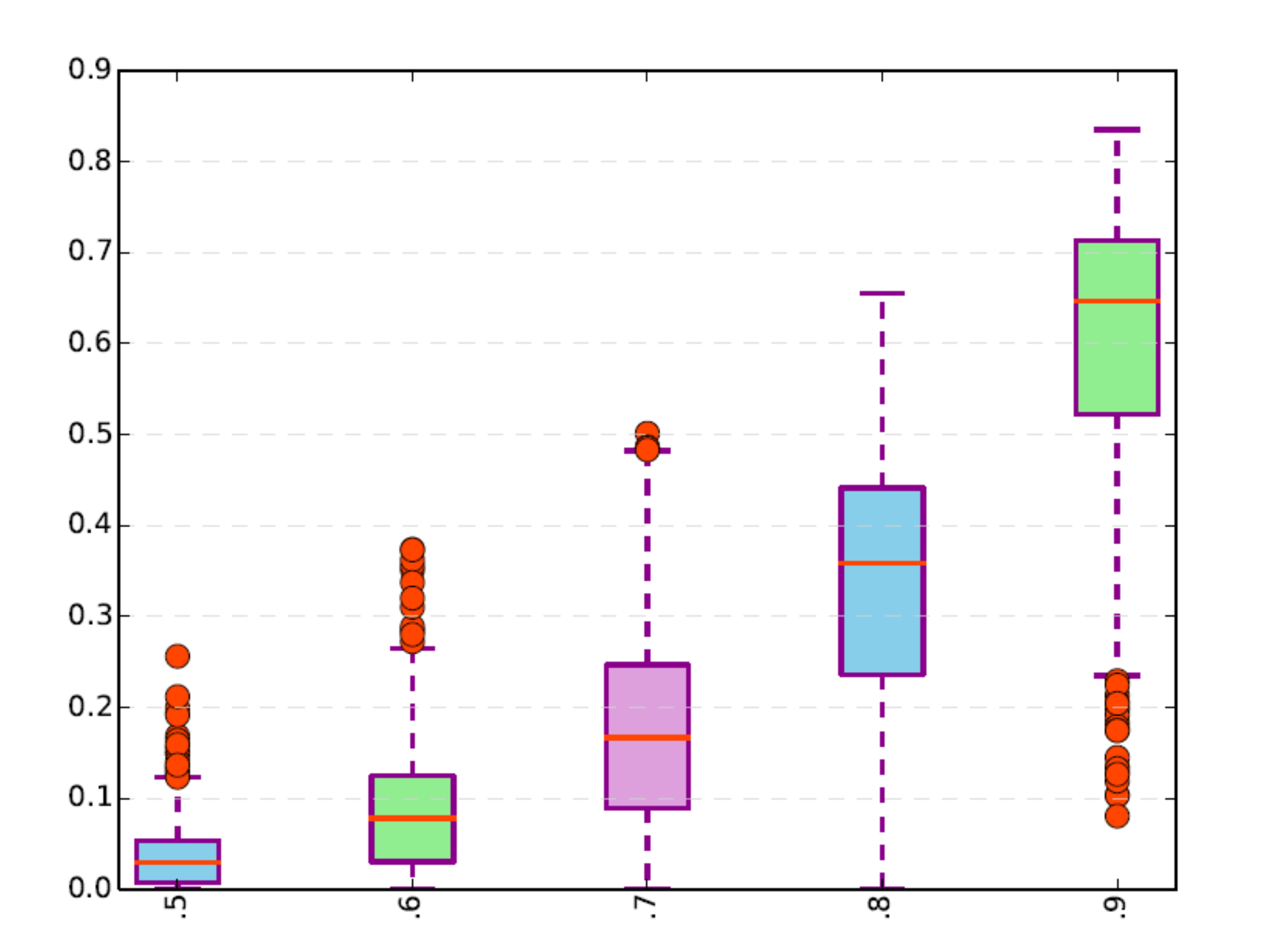}
	\caption{Box Plot of Percentage of Functional Number Almost Shortest Paths for the AS Graph}\label{fig:ASBox}
\end{subfigure}
\begin{subfigure}[h]{.4\textwidth}
\centering
\includegraphics[width=\linewidth]{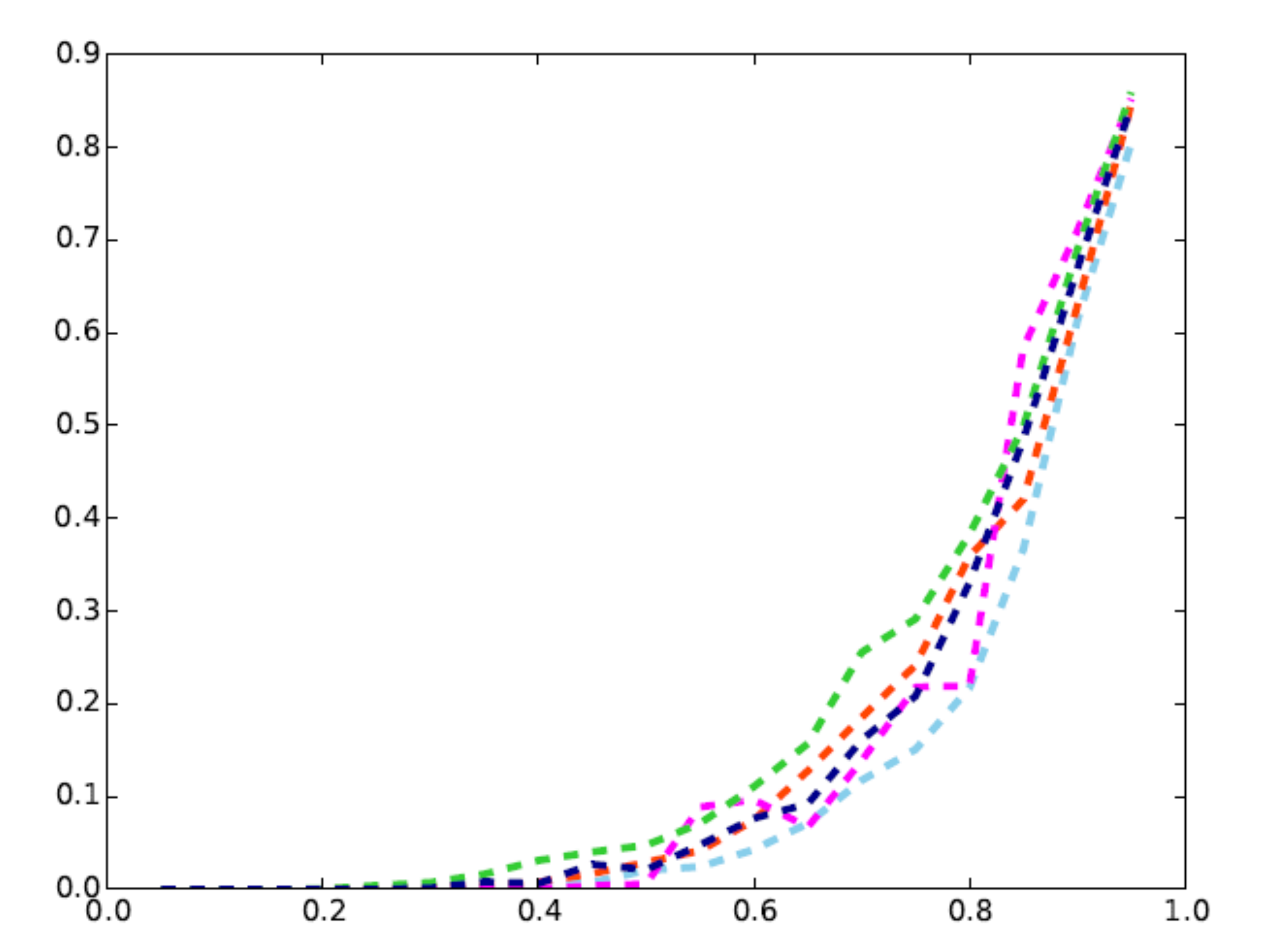}
\caption{Median Percentage of Functional Almost Shortest Paths for the AS Graph for 5 Different Nodes}\label{fig:ASMed}
\end{subfigure}
\caption{}
\label{fig:SIMS}
\end{figure}

In this work as dynamical robustness (and path existence) may fail to capture the potential ramifications of the existence of only a modest number of short, viable paths, we would instead like to track the percentage (or number) of surviving almost shortest paths under an edge deletion process, where we delete each edge from the graph independently with probability $p$.  On the left column of Figure \ref{fig:SIMS},\ref{fig:ERBox}, \ref{fig:CLuBox} and \ref{fig:ASBox}, we plotted box plots for the percentage of surviving almost shortest paths (y-axis) under an edge deletion process with probability $p$ denoted on the x-axis.  More specifically, we sampled 20 random pairs of nodes, one with 10 edges and another with 12 edges and repeated the edge deletion process 20 times for each of the 20 distinct node pairs with a given $p$.  When considering a collection of almost shortest paths, in practice we only included paths that were at most 3 or 4 edges longer than the path of minimal length.  For a particular pair of nodes, a collection of almost shortest paths could consist of more than 50 million paths.
In figure \ref{fig:ERBox}, we construct an Erdos-Renyi random graph with average degree chosen to match the AS Graph.  Subsequently, in figure \ref{fig:CLuBox}, we consider a Chung-Lu random graph with an expected degree sequence chosen to match the AS Graph as well \cite{miller2011efficient,hagberg2015fast}.  Finally in figure \ref{fig:ASBox}, we consider a snapshot of the AS Graph from January 2015, where we compiled edges based on route announcements from the Ripe and Route Views data set.

On the right column of Figure \ref{fig:SIMS}, we consider five randomly chosen node pairs and plot the median percentage of surviving paths.  Two perhaps surprising results emerge from Figure \ref{fig:SIMS}.  Firstly, for a given $p$, the median percentage of surviving paths appear to be roughly the same across all pairs of nodes (of sufficiently high degree) in spite of the fact that the existence of two paths under an edge deletion process is often dependent on one another  And secondly, while the Erdos-Renyi graph fails to capture the distribution of the percentages of surviving paths of almost shortest length in the AS Graph, the Chung-Lu random graph model behaves remarkably similar to the AS Graph and heavily suggests that knowledge of the degree sequence plays a fundamental role in predicting the percentage of surviving almost shortest paths.    

\section{Conclusions}
Identifying almost shortest paths between two nodes arises in numerous applications including internet routing and epidemiology.  Since we want to find {\em many} almost shortest paths in these real world networks, we would like our algorithm to exploit properties commonly found in these networks.  Consequently, we provided a simple algorithm for computing all  paths bounded by length $D$ between two nodes in an graph with $m$ weighted edges..  In particular, we demonstrated that the space and time complexity is $O(m\log m+kL)$, where $L$ is an upperbound for the number of nodes that appear in any almost shortest path, for graphs that exhibit certain real world network features.  

For many applications, we instead want to find the almost shortest {\em simple} paths, where we cannot visit a node more than once in a path.  Since computing almost shortest simple paths can be computationally expensive, we presented a rigorous framework for explaining when we could use a variant of our solution to solve the almost shortest simple paths problem.  More specifically, we analyzed the Chung-Lu random graph model, which emulates many of the properties frequently observed in real world networks, and demonstrated in Corollary \ref{cor:main} that for a flexible choice of parameters, we can approximate the number of simple paths between two nodes with the number of nonbacktracking paths of the same length.  We demonstrated how to modify the algorithm to efficiently compute almost shortest nonbacktracking paths.  And subsequently, we performed numeric simulations illustrating that the ratio of the number of paths to simple paths is well behaved in Chung-Lu random graphs.

In an effort to provide rigorus arguments supporting the efficiency of our algorithm for solving the almost shortest simple paths problem on the Chung-Lu random graph model, other questions organically emerged in the process.  While in this work we focused primarily on properties of the number of simple paths to nonbacktracking paths for Chung-Lu random graphs, we could ask about the ratio of simple paths to paths for other random graph models as well.  

Finally, we considered an application to internet routing where we would like to assess the quality of the connectivity between two nodes under an edge deletion process.  To measure the quality of connectivity we constructed large collections of almost shortest simple paths, often of potentially millions of paths, and then observed the number of paths that survive the edge deletion process through simulation.  Of particular interest, we found that the edge deletion process on the snapshot of the AS Graph looked remarkably similar to the simulations on realizations in the Chung-Lu random graph model with the appropriate expected degree sequence, further supporting the notion that Chung-Lu random graphs can emulate many of the properties observed in real world networks. 

Ultimately to find an efficient solution to the almost shortest path problem on real world networks, we need to consider the performance of the algorithm on plausible networks.  In this work, we not only provided an efficient solution to the almost shortest paths problems in terms of an important parameter of the problem for real world networks, the actual lengths of the paths, but also provided rigorous results relevant to the efficiency of using an almost shortest (nonbacktracking) path algorithm to find the almost shortest {\em simple} paths for realizations of the Chung-Lu random graph model, a model that captures many of the qualities empirically observed in real world networks.  


\section{Acknowledgements}
The authors would like to express their gratitude to Rachel Kartch and Rhiannon Weaver (CMU) for the many helpful conversations pertaining to the applications of the $k$ shortest path algorithms in context to internet routing in the Autonomous System graph.  Furthermore, the authors would also like to thank Philip Garrison (CMU) for his input in terms of constructing efficient $k$ shortest path algorithms for real world networks. \newline\newline
\noindent Copyright 2016 Carnegie Mellon University
\newline\newline
This material is based upon work funded and supported by Department of Homeland Security under Contract No. FA8721-05-C-0003 with Carnegie Mellon University for the operation of the Software Engineering Institute, a federally funded research and development center sponsored by the United States Department of Defense.
\newline\newline
NO WARRANTY. THIS CARNEGIE MELLON UNIVERSITY AND SOFTWARE ENGINEERING INSTITUTE MATERIAL IS FURNISHED ON AN “AS-IS” BASIS. CARNEGIE MELLON UNIVERSITY MAKES NO WARRANTIES OF ANY KIND, EITHER EXPRESSED OR IMPLIED, AS TO ANY MATTER INCLUDING, BUT NOT LIMITED TO, WARRANTY OF FITNESS FOR PURPOSE OR MERCHANTABILITY, EXCLUSIVITY, OR RESULTS OBTAINED FROM USE OF THE MATERIAL. CARNEGIE MELLON UNIVERSITY DOES NOT MAKE ANY WARRANTY OF ANY KIND WITH RESPECT TO FREEDOM FROM PATENT, TRADEMARK, OR COPYRIGHT INFRINGEMENT.
\newline\newline
[Distribution Statement A] This material has been approved for public release and unlimited distribution. Please see Copyright notice for non-US Government use and distribution.
\newline\newline
CERT \textregistered  is a registered mark of Carnegie Mellon University. DM-0003796

\bibliography{shortestpaths}

\end{document}